\newtheorem{thm}{Theorem}
\newtheorem{defi}{Definition}
\newtheorem{prop}{Proposition}
\newtheorem{cor}{Corollary}
\newtheorem{lem}{Lemma}
\newtheorem{rem}{Remark}
\newcommand{\pr}{\mathbb{P}}
\newcommand{\esp}{\mathbb{E}}
\newcommand{\Var}{\mathbb{V}\text{ar}}
\newcommand{\Cov}{\mathbb{C}\text{ov}}
\newcommand{\R}{\mathbb{R}}
\newcommand{\FDR}{\mbox{FDR}}
\newcommand{\pFDR}{\mbox{pFDR}}
\newcommand{\lfdr}{\mbox{$\ell$FDR}}
\title{Nonparametric estimation of the density of the alternative
  hypothesis in a multiple testing setup. Application to local false
  discovery rate estimation}
\author{Van Hanh Nguyen$^{1,2}$ and Catherine Matias$^2$}
\begin{document}
\thispagestyle{empty}
\DeclareGraphicsExtensions{.pdf, .jpg, .jpeg, .png, .eps, .ps}

\maketitle

\begin{center}
1.  Laboratoire  de Mathématiques  d'Orsay, Université Paris  Sud, UMR
CNRS 8628, Bâtiment 425, 91~405 Orsay Cedex, France. 
E-mail: nvanhanh@genopole.cnrs.fr \\
2. Laboratoire Statistique et Génome, Université d'Évry Val d'Essonne, 
UMR CNRS 8071- USC INRA, 23 bvd de France, 91~037 Évry, France.
E-mail:  catherine.matias@genopole.cnrs.fr
\end{center}

\begin{abstract}
In a multiple testing context, we consider a semiparametric mixture model with
two components  where one  component is known  and corresponds  to the
distribution of $p$-values under the null
hypothesis and the other component $f$ is nonparametric and stands for
the distribution  under the  alternative hypothesis. Motivated  by the
issue of local false discovery rate estimation, we focus here on the
estimation   of   the   nonparametric   unknown   component $f$  in   the
mixture, relying on  a preliminary estimator of  the unknown proportion
$\theta$ of true null hypotheses. We propose and study the asymptotic properties of two 
different estimators  for this unknown component.  The first estimator
is a randomly weighted kernel estimator.  We establish an upper bound for its pointwise quadratic
risk,  exhibiting the classical nonparametric rate of convergence over a  class
of Hölder densities. To our knowledge, this is the first result
establishing convergence as well as corresponding rate for the
estimation of the unknown component in this nonparametric mixture. 
The second estimator is a maximum smoothed likelihood estimator. It is
computed through an iterative algorithm, for which we establish a
descent property.
In addition, these estimators are used in a multiple testing
procedure in order to estimate the local false discovery rate. Their 
respective performances are then compared on synthetic data.
\end{abstract}

\smallskip
\noindent {\it Key words and phrases: False
  discovery rate; kernel estimation; local false discovery rate; maximum smoothed likelihood; multiple testing;
  $p$-values; semiparametric mixture model.}\\



\section{Introduction}            

In the framework of multiple testing problems (microarray analysis,
neuro-imaging, etc), a mixture model with two populations is considered
\begin{equation}
\label{eq-model}
\forall x \in \R^d, \quad 
g(x) = \theta \phi(x)+(1-\theta)f(x),
\end{equation}
where $\theta$ is the unknown proportion of true null hypotheses,
$\phi$ and $f$ are the densities of the observations generated
under the null and alternative hypotheses, respectively. More
precisely, assume the test statistics are independent and identically
distributed (iid) with a continuous distribution under the
corresponding null hypotheses and we observe
the $p$-values $X_1, X_2,\ldots, X_n$ associated with $n$ independent
tested hypotheses, then the density function $\phi$ is the uniform
distribution on $[0,1]$ while the density function $f$ is assumed unknown. The parameters of the model are $(\theta,f)$, where
  $\theta$   is    a   Euclidean    parameter   while   $f$    is   an
  infinite-dimensional one and the model becomes
\begin{equation}
\label{eq-model-pvalue}
\forall x \in [0,1], \quad 
g(x) = \theta+(1-\theta)f(x) . 
\end{equation}
In the following, we focus on model~\eqref{eq-model-pvalue} that is slightly simpler than~\eqref{eq-model}. A central
problem in the multiple testing setup is the control of type I
(\emph{i.e.} false positive) and type II (\emph{i.e.} false negative) errors. The
most popular criterion regarding type I errors is the false discovery
rate (\FDR), proposed by \cite{Benjamini1995}.  To set up
the notation, let $H_i$ be the $i$-th (null) hypothesis. The outcome
of testing $n$ hypotheses simultaneously can be summarized as indicated in Table~\ref{tab:outcomes}.

\begin{table}[h!]
  \centering
\caption{Possible outcomes from testing $n$ hypotheses $H_1,\dots,H_n$.}
\label{tab:outcomes}
  \begin{tabular}{llll}
  \hline
            & Accepts $H_i$ & Rejects $H_i$ & Total \\
  \hline
  $H_i$ is true & TN & FP & $n_0$ \\
  $H_i$ is false & FN & TP & $n_1$ \\
Total & N & P & $n$\\
  \hline
\end{tabular}
\end{table}

 \cite{Benjamini1995} define \FDR\ as the expected
 proportion of rejections that are incorrect,
\begin{equation*}
\FDR = \esp \Big[\frac{\mbox{FP}}{\max(\mbox{P}, 1)}\Big] = \esp \Big[\frac{\mbox{FP}}{\mbox{P}}
\big| \mbox{P} > 0\Big] \mathbb{P}(\mbox{P} >0).
\end{equation*}
They provide a multiple testing procedure that guarantees the bound $\FDR \leq
\alpha$, for  a desired level $\alpha$.  \cite{Storey2003} proposes to
modify \FDR\ so as to obtain a new criterion, the positive \FDR\ (or \pFDR), defined by 
\begin{equation*}
\pFDR = \esp \Big[\frac{\mbox{FP}}{\mbox{P}}
\big| \mbox{P} > 0\Big],
\end{equation*}
and argues that it is conceptually more sound than \FDR. For microarray
data for instance, there is a large value of the number of hypotheses
$n$ and the difference between \pFDR\ and \FDR\ is
generally small as the extra factor $\mathbb{P}(P >0)$ is very close
to $1$ \citep[see][]{Liao2004}. In  a mixture context,
the \pFDR\ is given by
\begin{equation*}
\pFDR(x) = \mathbb{P}( H_i\ \text{being true}\ | X \leq x) =
\frac{\theta \Phi(x)}{\theta \Phi(x)+(1-\theta)F(x)},
\end{equation*}
where $\Phi$ and $F$ are the cumulative distribution functions (cdfs) for densities
$\phi$ and $f$, respectively. (It is notationally convenient to
consider events of the form $X \leq x$, but we could just as well consider tail
areas to the right, two-tailed events, etc). 

\cite{Efron2001} define the local false discovery rate (\lfdr) to quantify
the plausibility of a particular hypothesis being true, given its
specific test statistic or $p$-value. In a mixture framework, the
\lfdr\ is the Bayes posterior probability
\begin{equation}
\label{eq-lfdr}
\lfdr(x) = \mathbb{P}( H_i\ \text{being true}\ | X=x) =
1-\frac{(1-\theta)f(x)}{\theta \phi(x)+(1-\theta)f(x)}.
\end{equation}
 In many multiple testing
frameworks, we need information at the individual level about the
probability for a given observation to be a false positive \citep{Aubert2004}. This motivates estimating the local false discovery rate
\lfdr. Moreover, the quantities \pFDR\ and \lfdr\ are analytically related by $\pFDR(x) = \esp [\lfdr(X)|  X \leq  x]$. As  a consequence  (and recalling  that  the difference
between \pFDR\ and \FDR\ is generally small), \cite{Robin2007} propose to estimate
\FDR\ by 
\begin{equation*}
\widehat{\FDR}(x_i) = \frac{1}{i} \sum_{j=1}^i \widehat{\lfdr}(x_j),
\end{equation*}
where $\widehat{\lfdr}$ is an estimator of \lfdr\ and the observations
$\{x_i\}$ are increasingly ordered.
A natural strategy to estimate \lfdr\ is to start by estimating both the
proportion  $\theta$ and  either $f$  or $g$.  Another  motivation for
estimating the parameters in this mixture model comes from the
works  of Sun  and Cai  (\citeyear{Sun_Cai07,Sun_Cai09}),  who develop
adaptive compound decision rules for false discovery rate control.  These 
 rules   are  based  on the estimation  of   the  parameters  in
model~\eqref{eq-model}   (dealing   with   $z$-scores)   rather   than
model~\eqref{eq-model-pvalue} (dealing with $p$-values).
However, it appears that in some very specific cases (when the alternative is symmetric about the null), the oracle
version of their  procedure based on the $p$-values  (and thus relying
on estimators of the parameters in model~\eqref{eq-model-pvalue}) may
outperform  the  one  based on  model~\eqref{eq-model}  \cite[see][for
more details]{Sun_Cai07}.  In the following, we are thus interested in
estimating parameters in model~\eqref{eq-model-pvalue}.\\

In a previous work \citep{Prop_True_Null}, we discussed the estimation
of    the   Euclidean    part   of    the   parameter    $\theta$   in
model~\eqref{eq-model-pvalue}. Thus, we will not consider further this point
here. We rather focus on the estimation of the unknown density $f$, relying on a preliminary estimator of $\theta$. We just mention that many estimators of $\theta$ have been proposed  in   the  literature. 
One of the most well-known is the one proposed by \cite{Storey2002}, motivating its use  in our simulations.  Some  of   these estimators   are proved to be consistent (under  suitable  model  assumptions). 
Of course, we will need some specific properties of estimators $\hat \theta_n$ of $\theta$ to obtain rates of convergence of estimators of $f$. Besides,  existence of estimators $\hat \theta_n$ satisfying those specific properties is a consequence of~\cite{Prop_True_Null}.

Now, different modeling assumptions on the marginal density $f$ have been
proposed in the literature.  For instance, parametric models have been
used  with  Beta  distribution   for  the  $p$-values  \citep[see  for
example][]{Allison2002, Pounds2003, Liao2004} or Gaussian distribution
of the probit transformation of the $p$-values \citep{McLachlan2006}. In the
framework of nonparametric estimation, \cite{Strimmer2008} proposed a
modified Grenander density estimator for $f$, which has been initially
suggested by
\cite{Langaas2005}. This approach requires monotonicity constraints on
the density $f$.  Other nonparametric approaches consist in
relying on regularity assumptions on $f$. This is done for instance in
\cite{Neuvial2010},  who  is primarily  interested  in estimating  $\theta$
under the assumption  that it is equal to $g(1)$.  Relying on a kernel
estimator of  $g$, he derives  nonparametric rates of  convergence for
$\theta$. 
Another kernel estimator  has been proposed by~\cite{Robin2007}, along
with  a  multiple  testing  procedure,  called  \texttt{kerfdr}.  This
iterative  algorithm   is  inspired  by   an  expectation-maximization
(\texttt{em}) procedure \citep{DempsterLR}. It 
is  proved  to  be convergent  as  the number  of
iterations increases.  However, it does not optimize any criterion and
contrarily to the original \texttt{em} algorithm, it does not increase the
observed data likelihood function. Besides, the
asymptotic properties (with the number of hypotheses $n$) of the kernel estimator underlying \citeauthor{Robin2007}'s approach have not
been studied. Indeed, its iterative form prevents from obtaining any theoretical
result on its convergence properties.

The first part of the present work focuses on the properties of a randomly
weighted  kernel estimator,  which in  essence, is  very similar  to the
iterative approach proposed by \cite{Robin2007}.  Thus, this part may
be viewed as a theoretical validation of \texttt{kerfdr}  approach that  gives some
insights  about  the  convergence   properties  (as  the  sample  size
increases) of this method. In particular, we establish that relying on
a preliminary estimator of $\theta$ that roughly converges at parametric rate (see exact condition in Corollary~\ref{cor1}),
we obtain  an estimator of the  unknown density $f$  that converges at
the usual minimax nonparametric rate. To our knowledge, this is the first result
establishing convergence as well as corresponding rate for the
estimation of the unknown component in model~\eqref{eq-model-pvalue}. 
In a  second part, we are  interested in a
new iterative  algorithm for estimating the unknown  density $f$, that
aims at maximizing a smoothed likelihood.  We refer to Paragraph 4.1 in \cite{Egg_LaR_book} for an
interesting  presentation  of kernel  estimators  as maximum  smoothed
likelihood  ones.   Here,  we  base   our  approach  on  the  work  of
\cite{Levine}, who  study a  maximum smoothed likelihood  estimator for
multivariate mixtures.  The main idea consists in introducing a nonlinear
smoothing operator on the unknown component $f$ as proposed in \cite{Egg_LaR_95}. 
We prove that the resulting algorithm possesses a desirable descent property, just as
an  \texttt{em}  algorithm does.  We  also  show  that it is competitive with respect to  \texttt{kerfdr} algorithm, both when used to estimate $f$ or \lfdr.

The  article  is organized  as  follows. In Section~\ref{sec:algorithms}, we start by describing different procedures to estimate $f$. We distinguish two types of procedures and first describe direct (non iterative) ones in Section~\ref{sec:direct}. We mention a direct naive approach but the main procedure from this section is a randomly weighted kernel estimator. Then, we switch to iterative procedures (Section~\ref{sec:iterative}). The first one is not new: \texttt{kerfdr} has been proposed in \cite{Robin2007,Guedj}. The second  one, called \texttt{msl}, is new and  adapted from the work of \cite{Levine} in a different context (multivariate mixtures). These iterative procedures are  expected to be more accurate than direct ones, but their properties are in general more difficult to establish. As such, the direct randomly weighted kernel estimator from Section~\ref{sec:direct} may be viewed as a proxy for studying the convergence properties (with respect to $f$) of \texttt{kerfdr} procedure (properties that are unknown). Section~\ref{sec:results} then gives the theoretical properties of the procedures described in  Section~\ref{sec:algorithms}. In particular, we establish (Theorem~\ref{thm:cv_f}) an upper bound on the pointwise quadratic risk of the randomly weighted kernel procedure. Moreover, we prove that \texttt{msl} procedure  possesses a descent property with respect to some criterion (Proposition~\ref{prop:hatft_decreases}).
In Section~\ref{sec:lfdr}, we rely on our different estimators to estimate both density $f$ and the local  false discovery rate \lfdr. We  present simulated experiments to compare  their performances.  All the  proofs have  been  postponed to
Section~\ref{sec:proofs}. Moreover, some  of the more technical proofs
have been further postponed to Appendix~\ref{sec:appendix}.

\section{Algorithmic procedures to estimate the density $f$}\label{sec:algorithms}

\subsection{Direct procedures}\label{sec:direct}
Let us be  given a  preliminary estimator $\hat{\theta}_n$ of $\theta$ as well as a  nonparametric estimator $\hat{g}_n$ of $g$.  We propose  here to rely on  a 
kernel estimator of the density $g$
\begin{equation}\label{eq:vallee_poussin}
\hat{g}_n(x) = \frac{1}{nh} \sum_{i=1}^n K\Big(\frac{x-X_i}{h}\Big) = \frac 1 n \sum_{i=1}^n K_{i,h}(x),
\end{equation} 
where $K$ is a  kernel (namely a real-valued integrable
function such that $\int K(u)du$ $ = 1$),  $h>0$ is a bandwidth (both are to be chosen later) and
\begin{equation}
  \label{eq:Kih}
K_{i,h}(\cdot)=\frac 1 h K\Big(\frac{\cdot-X_i}{h}\Big).
\end{equation}
Note that this estimator of $g$ is consistent under appropriate assumptions.

\paragraph*{A naive approach.}
 From Equation~\eqref{eq-model-pvalue}, it is natural to propose to estimate  $f$ with 
\begin{equation*}
\hat{f}_n^{\text{naive}}(x) = \frac{\hat{g}_n(x)-\hat{\theta}_n}{1 - \hat{\theta}_n}\mathbf{1}_{\{\hat{\theta}_n \neq 1 \}},
\end{equation*}  
where $1_A$ is the indicator function of set $A$.   
This estimator has the same theoretical properties as the randomly weighted kernel estimator presented below. However,  it is much worse in practice, as we shall see in the simulations of Section~\ref{sec:lfdr}.

\paragraph*{A randomly weighted kernel estimator.} 
We now explain a natural construction for an estimator of $f$ relying on a randomly weighted version of a kernel estimator of $g$.
 For any hypothesis, we introduce  a (latent) random variable $Z_i$ that equals $0$ if the null hypothesis $H_i$ is true and $1$ otherwise, 
\begin{equation}
  \label{eq:Z}
  \forall i=1,\dots, n \quad Z_i=\left\{
    \begin{array}[]{ll}
 0 & \mbox{if } H_i \mbox{ is true,}    \\ 
1 & \mbox{otherwise.}     
    \end{array}
\right.
\end{equation}
Intuitively, it would be convenient to introduce a weight for each observation $X_i$, meant to select this observation only if it comes from $f$. 
Equivalently, the weights are used to select the indexes $i$ such that $Z_i=1$. Thus, a natural kernel estimate of $f$ would be  
\begin{equation*}
f_1(x)     =    \frac{1}{h}\sum_{i=1}^n\frac{Z_i}{\sum_{k=1}^n    Z_k}
K\Big(\frac{x-X_i}{h}\Big)
= \sum_{i=1}^n\frac{Z_i}{\sum_{k=1}^n    Z_k} K_{i,h}(x),
\ x \in [0,1]. 
\end{equation*}
However, $f_1$ is not an estimator and cannot be directly used since the random variables $Z_i$ are
not observed. A natural approach \citep[initially proposed in][]{Robin2007} is to replace them with their conditional expectation given the data $\{X_i\}_{1\le i \le n}$, namely with the posterior probabilities
$\tau (X_i) =\esp (Z_i|X_i)$  defined by
\begin{equation}
\label{eq:posterior-proba}
\forall x \in [0,1], \ \tau(x) = \esp (Z_i|X_i=x)=\frac{(1-\theta)f(x)}{g(x)}=1 - \frac{\theta}{g(x)}.
\end{equation}
This leads to  the following definition  
\begin{equation}\label{eq:f2}
\forall x \in [0,1], \ f_2(x) = \sum_{i=1}^n \frac{\tau(X_i)} {\sum_{k=1}^n \tau(X_k)} K_{i,h}(x).
\end{equation}
Once again, the weight $\tau_i =\tau(X_i)$ depends on the unknown parameters
$\theta$ and  $f$ and  thus $f_2$  is not an  estimator but  rather an
oracle. To solve this  problem, \cite{Robin2007} proposed an iterative
approach,   called    \texttt{kerfdr}   and   discussed    below,   to
approximate~\eqref{eq:f2}.  For the moment, we propose to replace the posterior
probabilities $\tau_i$ by direct (rather than iterative) estimators to obtain a randomly weighted kernel
estimator of $f$.  
Specifically, we propose to estimate  the posterior
probability $\tau(x)$  by
\begin{equation}\label{eq:hat_tau}
\forall x \in [0,1], \ \hat{\tau}(x) = 1 - \frac{\hat{\theta}_n}{\hat{g}_n(x)}.
\end{equation}
Then, by defining the weight 
\begin{equation}\label{eq:def_weight}
\hat{\tau}_i = \hat{\tau}(X_i)
=1 - \frac{\hat{\theta}_n}{\tilde{g}_n(X_i)}, 
\ \text{where}\ \tilde{g}_n(X_i)=
\frac{1}{(n-1)} \sum_{j \neq i}^n K_{j,h}(X_i),
\end{equation}
we get a randomly  weighted kernel estimator of the density $f$ defined as 
\begin{equation}\label{eq:rwk}
\forall x \in [0,1], \ \hat{f}_n^{\text{rwk}}(x) = \sum_{i=1}^n \frac{\hat{\tau}_i}{\sum_{k=1}^n \hat{\tau}_k} K_{i,h}(x).
\end{equation}
Note that it is not necessary to use the same kernel $K$ in defining $\hat
g_n$ and $\hat f_n^{\text{rwk}}$, nor the same bandwidth $h$. In practice, we rely on the same kernel chosen with a compact support (to avoid boundary effects) and as we will see in Section~\ref{sec:results}, the bandwidths have to be chosen of the same order.   
Also note that the slight modification from $\hat g_n$ to $\tilde g_n$ in defining the weights~\eqref{eq:def_weight} is minor and used in practice to reduce the bias of $\tilde g_n(X_i)$.

\subsection{Iterative procedures}\label{sec:iterative}
In   this  section,  we   still  rely   on  a   preliminary  estimator
$\hat{\theta}_n$ of $\theta$.  Two different procedures are described:
\texttt{kerfdr}  algorithm, proposed  by~\cite{Robin2007,Guedj}  and a
maximum smoothed likelihood \texttt{msl} estimator, inspired from the work of~\cite{Levine} in the context of multivariate nonparametric mixtures.
Both rely on an iterative randomly weighted kernel approach. 
The general form of these procedures is described by Algorithm~\ref{algo:iterative}. 
The main difference between the two procedures lies in the choice of the functions $\tilde K_{i,h}$ (that play the role of a kernel) and the way the weights are updated. 

\begin{algorithm}
 \CommentSty{// Initialization}\;
Set initial weights $\hat{\omega}_i^0 \sim \mathcal{U}\big([0,1]\big), i=1,2,\ldots,n. $
 \BlankLine
 \BlankLine
  \While{$\max_i|\hat{\omega}_i^{(s)} -
\hat{\omega}_i^{(s-1)}|/\hat{\omega}_i^{(s-1)} \ge  \epsilon$}{
 \BlankLine
 \BlankLine
 \CommentSty{// Update estimation of $f$}\; 
$\hat{f}^{(s)}(x_i) =
\sum_j                                     \hat{\omega}_j^{(s-1)}\tilde
K_{j,h}(x_i)/\sum_k\hat{\omega}_k^{(s-1)}$\\
 \BlankLine
 \BlankLine
\CommentSty{// Update of weights}\; 
 $\hat{\omega}_i^{(s)}$: depends on the procedure, see Equations~\eqref{eq:weight_kerfdr} and \eqref{eq:weight_msl}
 \BlankLine
 \BlankLine
 $s\leftarrow s+1$\;
}
 \BlankLine
 \BlankLine
\CommentSty{// Return}\;
$\hat{f}^{(s)}(\cdot) =
\sum_i\hat{\omega}_i^{(s-1)}\tilde K_{i,h}(\cdot)/\sum_k\hat{\omega}_k^{(s-1)}$ 
\caption{General structure of the iterative algorithms}
\label{algo:iterative}
\end{algorithm}

Note that the parameter $\theta$ is fixed throughout these iterative procedures. Indeed, as already noted by \cite{Robin2007}, the solution $\theta=0$ is a fixed point of a modified \texttt{kerfdr} algorithm where $\theta$ would be iteratively updated. This is also the case with the maximum smoothed likelihood procedure described below in the particular setup of model~\eqref{eq-model-pvalue}. This is why we keep $\theta$ fixed in both procedures. We now describe more explicitly the two procedures.

\paragraph*{\texttt{Kerfdr} algorithm.}
This procedure has been  proposed by \cite{Robin2007,Guedj}
  as  an  approximation  to  the  estimator  suggested
by~\eqref{eq:f2}.  
In  this procedure,  functions  $\tilde K_{i,h}$  more simply  denoted
$K_{i,h}$  are defined  through~\eqref{eq:Kih} where  $K$ is  a kernel
(namely $\int K(u)du=1$) and following~\eqref{eq:posterior-proba}, the weights are updated as follows
\begin{equation}
  \label{eq:weight_kerfdr}
\hat{\omega}_i^{(s)} = \frac{(1 - \hat{\theta}_n)\hat{f}^{(s)}(x_i)} {  \hat{\theta}_n + (1 - \hat{\theta}_n)\hat{f}^{(s)}(x_i)}.
\end{equation}

This       algorithm      has       some       \texttt{em}      flavor
\citep{DempsterLR}. Actually, updating the weights $\hat{\omega}_i^{(s)}$
is equivalent to \texttt{expectation}-step, and $\hat{f}^{(s)}(x)$ can
be  seen as  an average  of $\{K_{i,h}(x)\}_{1\le  i \le  n}$  so that
updating    the    estimator     $\hat{f}$    may    look    like    a
\texttt{maximization}-step. However, as noted in~\cite{Robin2007}, the algorithm
does not optimize any given criterion. Besides, it does not increase the
observed data likelihood function.

The relation between $\hat{f}^{(s)}$ and $\hat{\omega}^{(s)}$ implies that the sequence
$\{\hat{\omega}^{(s)}\}_{s\ge 0}$ satisfies $\hat{\omega}^{(s)} =
\psi(\hat{\omega}^{(s-1)})$, where
\begin{eqnarray*}
\psi : [0,1]^n\backslash\{0\}\rightarrow [0,1]^n,\quad
\psi_i(u)=\frac{\sum_iu_ib_{ij}}{\sum_iu_ib_{ij}+\sum_iu_i}, \quad
\text{with}\quad b_{ij}=\frac{1-\hat{\theta}_n}{\hat{\theta}_n}\times \frac{K_{i,h}(x_j)}{\phi(x_j)}.
\end{eqnarray*}

Thus, if the sequence $\{\hat{\omega}^{(s)}\}_{s\ge 0}$ is convergent, it has to
converge towards a fixed  point of $\psi$. \cite{Robin2007} prove that
under some mild conditions, \texttt{kerfdr} estimator  is self-consistent, meaning that as the number of iterations $s$ increases, the
sequence $\hat{f}^{(s)}$ converges towards the function 
\begin{equation*}
f_3(x) =  \sum_{i=1}^n\frac{ \hat{\omega}_i^*}{\sum_{k} \hat{\omega}_k^* } K_{i,h}(x), 
\end{equation*}
where $\hat{\omega}_i^*$ is the (unique) limit of $\{\hat{\omega}_i^{(s)}\}_{s\ge
  0}$.  Note that contrarily to $f_2$, function $f_3$ is a randomly weighted kernel estimator of
$f$. However, nothing is known about the convergence of $f_3$ nor $\hat{f}^{(s)}$
towards  the true  density $f$  when the sample size $n$  tends to  infinity  (while the
bandwidth $h=h_n$ tends to 0).   Indeed, the weights $\{\hat{\omega}_i^{(s)}\}_{s\ge
  0}$ used by the kernel estimator $\hat{f}^{(s)}$ form an iterative sequence.  Thus it is very difficult to study the convergence properties of this weight sequence or of the corresponding estimator.

 We thus  propose another  randomly weighted
kernel estimator, whose weights are slightly different from those used
in the construction of $\hat f^{(s)}$. More precisely, those weights are
not defined iteratively but they mimic the sequence of weights $\{\hat{\omega}_i^{(s)}\}_{s\ge   0}$. 

\paragraph*{Maximum smoothed likelihood estimator.}
Following  the  lines  of  \cite{Levine}, we  construct  an  iterative
estimator sequence of the density  $f$ that relies on the maximisation
of a smoothed likelihood. 
Assume in the following that $K$ is a positive and symmetric kernel on $\mathbb{R}$. We define its rescaled version as
\[
K_h(x) = h^{-1}K(h^{-1}x). 
\]
We consider a linear
smoothing operator $\mathcal{S} : \mathbb{L}_1([0,1])
\rightarrow \mathbb{L}_1([0,1])$ defined as
 \begin{equation*}
\mathcal{S} f(x) = \int_0^1\frac{K_h(u-x)f(u)}{\int_0^1K_h(s-u)ds}du,\ \text{for all}\ x \in [0, 1].
\end{equation*}
We remark that if $f$ is a density on $[0,1]$ then $\mathcal{S} f$ is
also a density on $[0,1]$. Let us consider a submodel of
model~\eqref{eq-model-pvalue}    restricted    to   densities    $f\in
\mathcal{F}$ with 
 \begin{equation*}
\mathcal{F} = \{ \text{densities } f\ \text{on } [0,1]\ \text{such that }
\log f \in \mathbb{L}_1([0,1]) \}.
\end{equation*}
We denote by $\mathcal{S}^*: \mathbb{L}_1([0,1])
\rightarrow \mathbb{L}_1([0,1])$ the operator 
\begin{equation*}
\mathcal{S}^*f(x) = \frac{\int_0^1K_h(u-x)f(u)du}{\int_0^1K_h(s-x)ds}.
\end{equation*}
Note the difference between $\mathcal{S}$ and $\mathcal{S}^*$. The 
operator  $\mathcal{S}^*$   is  in   fact  the  adjoint   operator  of
$\mathcal{S}$. Here, we rely more  specifically on the earlier work of
\cite{Eggermont99} that takes into  account the case where the density
support ($[0,1]$  in our  case) is different  from the  kernel support
(usually $\mathbb{R}$).  Indeed in  this case, the normalisation terms
introduce a difference between $\mathcal{S}$ and $\mathcal{S}^*$.
Then for a density $f \in \mathcal{F}$, we approach it by a nonlinear
smoothing operator $\mathcal{N}$ defined as
 \begin{equation*}
\mathcal{N}f(x) = \exp\{(\mathcal{S}^*(\log f))(x)\}, \quad x\in [0,1].
\end{equation*}
Note that $\mathcal{N}f$ is not necessarily a density. 
Now, the maximum smoothed likelihood procedure consists in applying Algorithm~\ref{algo:iterative}, relying on 
\begin{equation}\label{eq:tildeKih}
\tilde K_{i,h} (x) =\frac {K_{i,h}(x)}{\int_0^1 K_{i,h}(s)ds},
\end{equation}
where  $K_{i,h}$  is   defined  through~\eqref{eq:Kih}  relying  on  a
positive symmetric kernel $K$ and 
\begin{equation}\label{eq:weight_msl}
\hat{\omega}_i^{(s)} = \frac{(1-\hat{\theta}_n)\mathcal{N}\hat{f}^{(s)}(x_i)}{\hat{\theta}_n+(1-\hat{\theta}_n)\mathcal{N}\hat{f}^{(s)}(x_i)}.
\end{equation}
In Section~\ref{sec:msl}, we explain where these choices come from and why this procedure corresponds to
a maximum smoothed likelihood approach. Let us remark that as in \texttt{kerfdr} algorithm, the sequence of weights $\{\hat{\omega}^{(s)}\}_{s\ge 0}$ also satisfies $\hat{\omega}^{(s)} = \varphi(\hat{\omega}^{(s-1)})$ for some specific function $\varphi$. Then, if the sequence $\{\hat{\omega}^{(s)}\}_{s\ge 0}$ is convergent, it must be convergent to a fixed point of $\varphi$. Existence and uniqueness of a fixed point for \texttt{msl} algorithm is explored below in Proposition~\ref{prop:subsequence}.

In the  following section, we  thus establish theoretical properties  of the
procedures presented here. These are then further compared on simulated data
in Section~\ref{sec:lfdr}.

\section{Mathematical properties of the algorithms}\label{sec:results}
\subsection{Randomly weighted kernel estimator}
We provide below the convergence properties of the estimator
$\hat{f}_n^{\text{rwk}}$ defined through~\eqref{eq:rwk}. In fact, these naturally depend on the properties of the
plug-in estimators $\hat{\theta}_n$ and $\hat{g}_n$. 
 We are interested  here in
 controlling the pointwise quadratic risk of $\hat{f}_n^{\text{rwk}}$. This
 is possible on a class of densities $f$ that are regular enough. In the
 following, 
we denote by  $\pr_{\theta,f}$   and  $\esp_{\theta,f}$  the   probability  and
 corresponding expectation in the more specific
 model~\eqref{eq-model-pvalue}. 
 Moreover, $\lfloor x \rfloor$ denotes the largest integer strictly
 smaller than $x$. Now, we recall that the order of a
  kernel is defined as its first nonzero moment \citep{Tsybakov-book} and we
  recall below the definition of Hölder classes of functions. 

\begin{defi}
Fix $\beta > 0, L > 0$ and denote by $H(\beta,L)$ the set of functions
$\psi :[0,1] \rightarrow \mathbb{R}$ that are $l$-times continuously
differentiable on $[0,1]$ with $ l = \lfloor \beta \rfloor$ and
satisfy
\begin{equation*}
| \psi^{(l)}(x) - \psi^{(l)}(y) | \leq L |x - y |^{\beta-l} ,\quad \forall
x,y \in [0,1].
\end{equation*}
The set $H(\beta,L)$ is called the $(\beta, L)$-Hölder class
of functions.
\end{defi}
We denote by $\Sigma(\beta,L)$ the set
\begin{equation*}
\Sigma(\beta,L) = \Big\{\psi : \psi\ \text{is a density on}\ [0, 1]\ \text{and}\ \psi \in H(\beta,L)\Big\}.
\end{equation*}
According to the proof of Theorem 1.1 in \cite{Tsybakov-book}, we
remark that  
\[
\sup_{\psi \in \Sigma(\beta,L)} \|\psi \|_{\infty} < +\infty.
\]
In order to obtain the rate of convergence of $\hat{f}_n^{\text{rwk}}$
to $f$, we introduce the following assumptions
\paragraph*{(A1)} \label{hyp:A1} The kernel $K$ is a right-continuous function.
\paragraph*{(A2)} \label{hyp:A2} $K$ is of bounded variation.
\paragraph*{(A3)} \label{hyp:A3} The kernel $K$ is of order $l=\lfloor \beta \rfloor$ and satisfies
\begin{equation*}
\int K(u)du=1,\ \int K^2(u)du < \infty,\ \text{and}\ \int
|u|^{\beta}|K(u)|du < \infty.
\end{equation*}
\paragraph*{(B1)} \label{hyp:B1} $f$ is a uniformly continuous density
function. 
\paragraph*{(C1)} \label{hyp:C1} The bandwidth $h$ is of order $\alpha n^{-1/(2\beta+1)}$, $\alpha
> 0$.\\

Note that there exist kernels satisfying
Assumptions~\hyperref[hyp:A1]{\textbf{(A1)}}-\hyperref[hyp:A3]{\textbf{(A3)}}
\cite[see for instance Section 1.2.2 in][]{Tsybakov-book}. Note also that if $f\in \Sigma(\beta,L)$, it automatically satisfies Assumption~\hyperref[hyp:B1]{\textbf{(B1)}}.

\begin{rem}
\label{rem:first}
\begin{itemize}
\item [i)]   We   first   remark   that   if    kernel   $K$   satisfies
Assumptions~\hyperref[hyp:A1]{\textbf{(A1)}},~\hyperref[hyp:A2]{\textbf{(A2)}} and if Assumptions~\hyperref[hyp:B1]{\textbf{(B1)}} and \hyperref[hyp:C1]{\textbf{(C1)}} hold, then the
kernel density estimator $\hat{g}_n$ defined by~\eqref{eq:vallee_poussin} converges uniformly almost surely to $g$ \citep{Wied}. In other words 
\[
\|\hat{g}_n - g\|_{\infty}
  \xrightarrow[n\to\infty] {a.s} 0.
\] 
\item[ii)] If kernel $K$ satisfies Assumption~\hyperref[hyp:A3]{\textbf{(A3)}} and if 
Assumption~\hyperref[hyp:C1]{\textbf{(C1)}} holds, then for all $n \geq 1$
\begin{equation*}
\sup_{x \in [0,1]}\sup_{f \in \Sigma(\beta,L)}
\esp_{\theta,f}(|\hat{g}_n(x)-g(x)|^{2}) \leq C
n^{\frac{-2\beta}{2\beta+1}},
\end{equation*}
where   $C=C(\beta,   L,   \alpha,   K)$  
\cite[see   Theorem   1.1   in][]{Tsybakov-book}.
\end{itemize}
\end{rem}

In  the following theorem,  we give the
rate  of  convergence to  zero  of  the  pointwise quadratic  risk  of
$\hat{f}_n^{\text{rwk}}$.
\begin{thm}
\label{thm:cv_f}
Assume           that          kernel           $K$          satisfies
Assumptions~\hyperref[hyp:A1]{\textbf{(A1)}}\textbf{-}\hyperref[hyp:A3]{\textbf{(A3)}}
and $K \in\mathbb{L}_4(\mathbb{R})$. 
If $\hat{\theta}_n$
converges almost surely to $\theta$ and the bandwidth $h=\alpha
n^{-1/(2\beta +  1)}$ with $\alpha > 0$, then for any $\delta >0$, the  pointwise quadratic risk  of $\hat{f}_n^{\text{rwk}}$
satisfies 
\begin{eqnarray*}
\sup_{x  \in   [0,1]}\sup_{\theta  \in  [\delta,1-\delta]}\sup_{f  \in
  \Sigma(\beta,L)} \esp_{\theta,f}(|\hat{f}_n^{\text{rwk}}(x)-f(x)|^{2}) &\leq 
 & C_1 \sup_{\theta \in [\delta,1-\delta]}\sup_{f \in \Sigma(\beta,L)}
\left[\esp_{\theta,f}\left(|\hat{\theta}_n-\theta|\right)^4\right]^{\frac{1}{2}}\\
& & +C_2 n^{\frac{-2\beta}{2\beta+1}},
\end{eqnarray*}
where  $C_1,C_2$   are  two  positive  constants   depending  only  on
$\beta, L, \alpha, \delta$ and $K$.
\end{thm}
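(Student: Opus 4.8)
The plan is to write $\hat f_n^{\text{rwk}}(x)-f(x)$ as a ratio and to control the numerator and the denominator separately. With $\hat\tau_i=\hat\tau(X_i)$ as in~\eqref{eq:def_weight},
\[
\hat f_n^{\text{rwk}}(x)-f(x)=\frac{R_n(x)}{D_n},\qquad R_n(x)=\frac1n\sum_{i=1}^n\hat\tau_i\bigl(K_{i,h}(x)-f(x)\bigr),\quad D_n=\frac1n\sum_{k=1}^n\hat\tau_k .
\]
First I would introduce the event $\mathcal A_n=\{|\hat\theta_n-\theta|\le\delta/4\}\cap\{\|\hat g_n-g\|_\infty\le\delta/4\}$. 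On $\mathcal A_n$, since $g=\theta+(1-\theta)f\ge\theta\ge\delta$, one checks that $\tilde g_n(X_i)$ stays bounded away from $0$ for every $i$, that $|\hat\theta_n|$ is bounded, and that $D_n\ge c(\delta)>0$ (it converges to $\esp_{\theta,f}[\tau(X)]=1-\theta\ge\delta$). On $\mathcal A_n^c$ I would bound the contribution crudely and use that $\pr_{\theta,f}(\mathcal A_n^c)$ is negligible, thanks to the assumed a.s.\ convergence of $\hat\theta_n$ and to the a.s.\ uniform consistency of $\hat g_n$ recalled in Remark~\ref{rem:first}(i); making this last estimate quantitative is a technical point I would relegate to the appendix. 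It then remains to bound $\esp_{\theta,f}[\mathbf 1_{\mathcal A_n}R_n(x)^2]$ uniformly in $x\in[0,1]$, $\theta\in[\delta,1-\delta]$ and $f\in\Sigma(\beta,L)$; all constants below are uniform because $\delta\le g\le\|g\|_\infty$, $\sup_{f\in\Sigma(\beta,L)}\|f\|_\infty<\infty$, $g\in H(\beta,(1-\theta)L)$, and Remark~\ref{rem:first}(ii) is uniform.

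The second step uses the algebraic identity $\hat\tau_i-\tau(X_i)=\dfrac{\theta-\hat\theta_n}{g(X_i)}+\hat\theta_n\,\dfrac{\tilde g_n(X_i)-g(X_i)}{g(X_i)\tilde g_n(X_i)}$, which gives $R_n(x)=T_{1,n}(x)+(\theta-\hat\theta_n)T_{2,n}(x)+\hat\theta_n T_{3,n}(x)$, where $T_{1,n}(x)=\tfrac1n\sum_i\tau(X_i)(K_{i,h}(x)-f(x))$, $T_{2,n}(x)=\tfrac1n\sum_i(K_{i,h}(x)-f(x))/g(X_i)$ and $T_{3,n}(x)=\tfrac1n\sum_i(\tilde g_n(X_i)-g(X_i))(K_{i,h}(x)-f(x))/(g(X_i)\tilde g_n(X_i))$. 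The term $T_{1,n}(x)$ is an average of i.i.d.\ random variables: its bias is $O(h^\beta)$ (regularity of $g$ together with the order of $K$, Assumption~\hyperref[hyp:A3]{\textbf{(A3)}}) and its variance is $O((nh)^{-1})$, so $\esp_{\theta,f}[T_{1,n}(x)^2]\le C\,n^{-2\beta/(2\beta+1)}$ under Assumption~\hyperref[hyp:C1]{\textbf{(C1)}}. For $(\theta-\hat\theta_n)T_{2,n}(x)$ I would apply the Cauchy--Schwarz inequality to decouple $\hat\theta_n-\theta$ from the kernel average: since $g\ge\delta$ and $K\in\mathbb{L}_4(\mathbb{R})$ one has $\esp_{\theta,f}[T_{2,n}(x)^4]\le C$, so that $\esp_{\theta,f}[\mathbf 1_{\mathcal A_n}(\theta-\hat\theta_n)^2T_{2,n}(x)^2]\le C_1\bigl[\esp_{\theta,f}(|\hat\theta_n-\theta|)^4\bigr]^{1/2}$. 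This is the origin of the first term of the bound, and essentially the only place where the assumption $K\in\mathbb{L}_4$ is needed.

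The main obstacle is $T_{3,n}(x)$, where the leave-one-out estimator $\tilde g_n$ appearing in the weights makes the summands dependent (note that it is precisely the leave-one-out construction that makes the evaluation point $X_i$ independent of the sample entering $\tilde g_n(X_i)$). I would condition on each $X_i$ and decompose $\tilde g_n(X_i)-g(X_i)=b_n(X_i)+\eta_i$, with $b_n(X_i)=\esp[\tilde g_n(X_i)\mid X_i]-g(X_i)$ a deterministic bias satisfying $\sup_{[0,1]}|b_n|=O(h^\beta)$, and $\eta_i=\tfrac1{n-1}\sum_{j\ne i}\bigl(K_{j,h}(X_i)-\esp[K_{j,h}(X_i)\mid X_i]\bigr)$ centered given $X_i$ with conditional second moment $O((nh)^{-1})$, both uniformly in $X_i$ (these follow from Remark~\ref{rem:first}(ii) applied at a fixed point to the $(n-1)$-sample, using Jensen's inequality for the bias part). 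Expanding $\esp_{\theta,f}[\mathbf 1_{\mathcal A_n}T_{3,n}(x)^2]$, the diagonal terms contribute $O((nh)^{-1}n^{-2\beta/(2\beta+1)})=O(n^{-4\beta/(2\beta+1)})$; for the off-diagonal terms one needs a Hoeffding-type bookkeeping, keeping only the cross-covariances in which a sample point is shared between $\eta_i$ and $\eta_j$ --- the $b_nb_n$ part is $O(h^{2\beta})$ and the surviving $\eta\eta$ part is $O((nh)^{-1})$, the remaining pieces being of smaller order. The multiple kernel integrals occurring here are finite thanks to $K\in\mathbb{L}_1\cap\mathbb{L}_2$ and $\|K\|_\infty<\infty$ (which follows from Assumption~\hyperref[hyp:A2]{\textbf{(A2)}}). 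Hence $\esp_{\theta,f}[\mathbf 1_{\mathcal A_n}T_{3,n}(x)^2]\le C\,n^{-2\beta/(2\beta+1)}$. Combining the three pieces via $(a+b+c)^2\le3(a^2+b^2+c^2)$, using $|\hat\theta_n|\le C$ on $\mathcal A_n$, and dividing by $\inf_{\mathcal A_n}D_n^2\ge c(\delta)^2$, then taking the supremum over $x,\theta,f$, yields the announced inequality. I expect the bookkeeping of the off-diagonal cross-covariances of $T_{3,n}$ to be the most delicate step, together with the (routine but tedious) quantitative control of the estimator on $\mathcal A_n^c$; both would be deferred to the appendix.
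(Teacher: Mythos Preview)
Your proposal is correct and close in spirit to the paper's, but organized differently in a way that yields a genuine simplification. The paper inserts the oracle $f_2$ of~\eqref{eq:f2} and splits $\esp_{\theta,f}|\hat f_n^{\text{rwk}}-f|^2\le 2\,\esp_{\theta,f}|f_2-f|^2+2\,\esp_{\theta,f}|\hat f_n^{\text{rwk}}-f_2|^2$; because $f_2$ is a \emph{self-normalised} sum (random denominator $\sum_k\tau_k$), the first piece requires a separate proposition whose main technical tool is a lemma on the moments of $S_n^{-1}$ with $S_n=\sum_i f(X_i)/g(X_i)$, while the second piece is handled via exactly the identity for $\hat\tau_i-\tau_i$ that you write, producing terms matching your $T_{2,n},T_{3,n}$ together with two further terms coming from $1/\sum_k\hat\tau_k-1/\sum_k\tau_k$. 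By writing $\hat f_n^{\text{rwk}}-f=R_n/D_n$ and bounding $D_n$ from below on the good event, you bypass the self-normalisation entirely: your $T_{1,n}$ is a plain i.i.d.\ average with an elementary bias--variance bound, so the $S_n^{-1}$ lemma is simply not needed. Your U-statistic treatment of $T_{3,n}$ is a legitimate alternative to the paper's, which instead expands the square, conditions on $X_i$ (exploiting the leave-one-out structure of $\tilde g_n$) and applies Remark~\ref{rem:first}(ii) pointwise; note however that in your version you should first bound $1/(g(X_i)\tilde g_n(X_i))\le c(\delta)$ on $\mathcal A_n$ before performing the bias--fluctuation split, since otherwise the random $\tilde g_n$ in the denominator recouples the summands. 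One caveat you share with the paper: the contribution of $\mathcal A_n^c$ (in the paper, the assertion that various almost-sure bounds hold ``for $n$ large enough'' inside expectations) cannot be made quantitative from the sole hypothesis that $\hat\theta_n\to\theta$ almost surely; this is a looseness already present in the paper's argument rather than a defect specific to your route.
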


The proof of this theorem is postponed to Section~\ref{ann:thm1}. It works as follows: 
we first start by proving that the pointwise quadratic risk of
$f_2$ (which is not an estimator) is of order $n^{-2\beta/(2\beta + 1)}$. Then we
compare   estimator $\hat{f}_n^{\text{rwk}}$ with  function $f_2$  to conclude
the  proof. We evidently  obtain  the  following
corollary from this theorem.

\begin{cor}\label{cor1}
Under  the assumptions of  Theorem~\ref{thm:cv_f}, if  $\hat \theta_n$
is such  that 
\begin{equation}
  \label{eq:condition}
\limsup_{n \rightarrow +\infty} n^{\frac{2\beta}{2\beta+1}}
\left[\esp_{\theta,f}\left(|\hat{\theta}_n-\theta|\right)^4\right]^{\frac{1}{2}} < +\infty,
\end{equation}
then for any fixed value $(\theta,f)$, there is some positive constant $C$ such that 
\begin{align*}
\sup_{x  \in   [0,1]}\esp_{\theta,f}(|\hat{f}_n^{\text{rwk}}(x)-f(x)|^{2}) \leq 
C n^{\frac{-2\beta}{2\beta+1}}. 
\end{align*}
\end{cor}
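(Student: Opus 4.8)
The plan is to obtain the corollary as an immediate specialization of Theorem~\ref{thm:cv_f} to a single fixed parameter value, followed by an application of the growth condition~\eqref{eq:condition}. Fix $(\theta,f)$ in the model, so that $\theta \in (0,1)$ and $f \in \Sigma(\beta,L)$, and choose $\delta \in (0,1/2)$ small enough that $\theta \in [\delta,1-\delta]$; all the hypotheses of Theorem~\ref{thm:cv_f} are then in force for this $\delta$.

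First I would invoke the risk bound of Theorem~\ref{thm:cv_f}. Its right-hand side carries a supremum over $\theta \in [\delta,1-\delta]$ and $f\in\Sigma(\beta,L)$ in the $\hat\theta_n$-term, but the argument behind the theorem in fact yields the sharper pointwise-in-parameter inequality
\[
\esp_{\theta,f}\big(|\hat{f}_n^{\text{rwk}}(x)-f(x)|^{2}\big) \;\le\; C_1 \Big[\esp_{\theta,f}\big(|\hat{\theta}_n-\theta|\big)^4\Big]^{1/2} + C_2\, n^{-2\beta/(2\beta+1)},
\]
valid for every $x\in[0,1]$ and every $n\ge 1$, with $C_1,C_2$ depending only on $\beta,L,\alpha,\delta,K$. (This is also the form matching condition~\eqref{eq:condition}, which is stated for a fixed $(\theta,f)$.) Taking the supremum over $x\in[0,1]$ keeps the same bound since the right-hand side does not depend on $x$.

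Next I would feed in condition~\eqref{eq:condition}. Finiteness of the $\limsup$ provides an integer $N\ge 1$ and a constant $M<\infty$ with $\big[\esp_{\theta,f}(|\hat{\theta}_n-\theta|)^4\big]^{1/2} \le M\, n^{-2\beta/(2\beta+1)}$ for all $n\ge N$. Combined with the displayed inequality, this gives $\sup_{x\in[0,1]}\esp_{\theta,f}(|\hat{f}_n^{\text{rwk}}(x)-f(x)|^{2}) \le (C_1 M + C_2)\, n^{-2\beta/(2\beta+1)}$ for $n\ge N$. It remains to absorb the finitely many indices $n<N$: for each such $n$, $\sup_{x}\esp_{\theta,f}(|\hat{f}_n^{\text{rwk}}(x)-f(x)|^{2})$ is finite (because $\|f\|_\infty<\infty$ on $[0,1]$ and the moments of the randomly weighted kernel estimator are controlled, as in the proof of Theorem~\ref{thm:cv_f}), so enlarging the constant to $C = \max\{C_1 M + C_2,\ \max_{1\le n<N} n^{2\beta/(2\beta+1)}\sup_{x}\esp_{\theta,f}(|\hat{f}_n^{\text{rwk}}(x)-f(x)|^{2})\}$ yields the stated bound for all $n\ge 1$.

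I do not expect a genuine obstacle here: the entire analytic content sits in Theorem~\ref{thm:cv_f}, and the corollary is essentially bookkeeping. The only points requiring a word of care are (i) recording that the proof of Theorem~\ref{thm:cv_f} actually delivers the non-uniform, per-parameter version of the risk bound used above, and (ii) the elementary remark that the pointwise quadratic risk of $\hat{f}_n^{\text{rwk}}$ is finite for each fixed $n$, so that the small-$n$ terms can be swallowed into the constant.
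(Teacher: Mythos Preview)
Your proposal is correct and follows essentially the same route as the paper, which treats the corollary as an immediate consequence of Theorem~\ref{thm:cv_f} (``We evidently obtain the following corollary from this theorem''). Your point (i) is well taken: the final display in the proof of Theorem~\ref{thm:cv_f} is indeed the per-parameter inequality you use, and your small-$n$ absorption step is a legitimate bit of housekeeping that the paper leaves implicit; one could also note that the bound from Theorem~\ref{thm:cv_f} is itself only asserted for $n$ large enough (several steps in its proof invoke almost sure convergence), so the absorption into the constant is needed in any case.
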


Note   that  estimators   $\hat  \theta_n$ satisfying~\eqref{eq:condition} exist. 
Indeed, relying on the same arguments as in the proofs of Propositions 2 or 3 in \cite{Prop_True_Null}, we can prove that for instance histogram-based estimators or the estimator proposed by~\cite{Celisse-Robin2010} both satisfy that 
 $$\limsup_{n \rightarrow +\infty} n
\left[\esp_{\theta,f}\left(|\hat{\theta}_n-\theta|\right)^4\right]^{\frac{1}{2}} < +\infty.$$
Note  also  that the  rate  $n^{-\beta/(2\beta  +  1)}$ is  the  usual
nonparametric minimax rate over  the class $\Sigma(\beta,L)$ of Hölder
densities  in  the case  of  direct  observations.   While we  do  not
formally   prove   that   this   is   also  the   case   in   undirect
model~\eqref{eq-model-pvalue}, it is likely that the rate
  in  this  latter  case  is   not  faster  as  the  problem  is  more
  difficult. A difficulty  in establishing such a lower  bound lies in
  the fact  that when  $\theta\in [\delta,1-\delta]$ the  direct model
  ($\theta=0$) is not a submodel of~\eqref{eq-model-pvalue}.
Anyway, such a lower bound would not be sufficient to conclude that estimator $\hat{f}_n^{\text{rwk}}$
achieves  the  minimax rate.  Indeed,  the corollary states nothing about uniform
convergence  of   $\hat  f_n^{\text{rwk}}(x)$  with   respect  to  the
parameter value  $(\theta,f)$ since  the convergence of  the estimator
$\hat \theta_n$ is not known to  be uniform.

\subsection{Maximum  smoothed likelihood estimator}\label{sec:msl}
Let  us now  explain  the motivations  for  considering an  iterative
procedure   with  functions   $\tilde  K_{i,h}$   and   weights  $\hat
\omega_i^{(s)}$ respectively defined through~\eqref{eq:tildeKih} and~\eqref{eq:weight_msl}.
Instead  of   the  classical  log-likelihood,  we   follow  the  lines
of~\cite{Levine} and consider (the opposite of)
a smoothed version of this log-likelihood as our criterion, namely 
\begin{equation*}
l_n(\theta,      f)     =      \frac{-1}{n}      \sum_{i=1}^n     \log
[\theta+(1-\theta)\mathcal{N}f(X_i)] .
\end{equation*}
In this section, we denote by $g_0$ the true density of the observations $X_i$. For any fixed value of
$\theta$, up to the additive constant $\int_0^1
g_0(x)\log g_0(x)dx$, the smoothed log-likelihood $l_n(\theta, f)$
converges almost surely towards $l(\theta, f)$
defined as
\begin{equation*}
l(\theta, f) :=\int_0^1 g_0(x) \log \frac{g_0(x)}{\theta + (1-\theta)\mathcal{N}f(x)}dx.
\end{equation*}
This quantity may be viewed as a penalized  Kullback-Leibler
divergence between the true density $g_0$ and its smoothed approximation for parameters $(\theta,f)$. Indeed, let $D(a\mid b)$ denote the  Kullback-Leibler divergence between (positive) measures $a$ and $b$, defined as
\begin{equation*}
D(a\mid b) = \int_0^1 \Big\{a(x)\log \frac{a(x)}{b(x)}+b(x)-a(x)\Big\}dx.
\end{equation*}
Note that in the above definition, $a$ and $b$ are not necessarily probability measures. Moreover it can be seen that we still have the property $D(a|b)\ge 0$ with equality if and only if $a=b$ \citep{Eggermont99}.
We now obtain 
\begin{equation*}
l(\theta, f) = D(g_0 | \theta + (1-\theta)\mathcal{N}f) + (1-\theta) \big(1-\int_0^1 \mathcal{N}f(x) dx \big).
\end{equation*}
The second term in the right-hand side of the above equation acts as a penalization term \citep{Eggermont99,Levine}. Our goal is to construct an iterative sequence of estimators of $f$ that possesses a descent property with respect
to the criterion $l(\theta,\cdot)$, for fixed value $\theta$. Indeed, as previously explained,  $\theta$ has to remain fixed otherwise the following procedure gives a sequence $\{\theta^t\}$ that converges to $0$.
We start by describing such a procedure, relying on the knowledge of the parameters (thus an oracle procedure).  Let us denote by $l_n(f)$ the smoothed log-likelihood
$l_n(\theta, f)$ and by $l(f)$ the limit function $l(\theta,f)$. We
want to construct  a sequence of densities $\{f^t\}_{t \ge 0}$ such that
\begin{equation}
\label{eq:creasing}
l(f^t) - l(f^{t+1}) \geq c D(f^{t+1}\mid f^t) \ge 0,
\end{equation}
where $c$ is a positive constant depending on $\theta$, the bandwidth $h$ and the kernel $K$. We thus consider the difference
\begin{eqnarray*}
l(f^t) - l(f^{t+1}) &=& \int_0^1 g_0(x)\log
\frac{\theta+(1-\theta)\mathcal{N}f^{t+1}(x)}{\theta+(1-\theta)\mathcal{N}f^{t}(x)}
dx\\
&=& \int_0^1 g_0(x)\log \Big\{1-\omega_t(x) +
  \omega_t(x)\frac{\mathcal{N}f^{t+1}(x)}{\mathcal{N}f^{t}(x)}\Big\} dx,
\end{eqnarray*}
where 
\begin{equation*}
\omega_t(x) = \frac{(1-\theta)\mathcal{N}f^t(x)}{\theta+(1-\theta)\mathcal{N}f^t(x)}. 
\end{equation*}
By the concavity of the logarithm function, we get that
\begin{eqnarray}
\label{eq-decroissante}
l(f^t) - l(f^{t+1}) &\geq& \int_0^1 g_0(x) \omega_t(x)\log
\frac{\mathcal{N}f^{t+1}(x)}{\mathcal{N}f^{t}(x)}dx \notag \\
&\geq& \int_0^1 g_0(x) \omega_t(x)\Big[\mathcal{S}^*(\log f^{t+1})(x) -
\mathcal{S}^*(\log f^t)(x)\Big] dx \notag \\
&\geq& \int_0^1 g_0(x) \omega_t(x) \big(\int_0^1 K_h(s-x)ds\big)^{-1}\Big(\int_0^1 K_h(u-x)\log
  \frac{f^{t+1}(u)}{f^t(u)}du\Big)dx \notag \\
&\geq& \int_0^1 \Big(\int_0^1\frac{ g_0(x)
  \omega_t(x)K_h(u-x)}{\int_0^1 K_h(s-x)ds} dx \Big)\log
  \frac{f^{t+1}(u)}{f^t(u)}du.
\end{eqnarray}
Let us define
\begin{equation}
\label{eq:f^t}
\alpha_t = \frac{1}{\int_0^1 \omega_t(u)g_0(u)du}\ \text{and} \ f^{t+1}(x) = \alpha_t\int_0^1\frac{
  K_h(u-x)\omega_t(u)g_0(u)}{\int_0^1 K_h(s-u)ds} du, 
\end{equation}
then $f^{t+1}$ is a density function on $[0,1]$ and
\begin{equation*}
l(f^t) - l(f^{t+1}) \geq \frac{1}{\alpha_t} D(f^{t+1}\mid f^t).
\end{equation*}
With the same arguments as in the proof of following Proposition~\ref{prop:hatft_decreases}, we can show that $\alpha_t^{-1}$ is lower bounded by a positive constant $c$ depending on $\theta, h$ and $K$. The
sequence        $\{f^t\}_{t\ge        0}$        thus        satisfies
property~\eqref{eq:creasing}. However, we stress that it is an oracle as it
depends on the knowledge of the true density $g_0$ that is unknown. 
Now, the estimator sequence $\{\hat f^{(t)}\}_{t\ge 0}$ defined through Equations~\eqref{eq:tildeKih},~\eqref{eq:weight_msl} and Algorithm~\ref{algo:iterative} is exactly the Monte Carlo approximation of   $\{f^t\}_{t\ge        0}$. 
We prove in the next proposition that it also satisfies the descent property~\eqref{eq:creasing}.

\begin{prop}\label{prop:hatft_decreases}         
For any initial value of the weights $\hat \omega_0\in (0,1)^n$, the sequence of estimators $\{\hat{f}^{(t)}\}_{t\ge 0}$ defined through~\eqref{eq:tildeKih},~\eqref{eq:weight_msl} and Algorithm~\ref{algo:iterative} satisfies
\begin{equation*}
l_n(\hat{f}^{(t)}) - l_n(\hat{f}^{(t+1)}) \geq c D(\hat{f}^{(t+1)}\mid \hat{f}^{(t)})\ge 0,
\end{equation*}
where $c$ is a positive constant depending  on $\theta$, the bandwidth $h$ and the kernel $K$.
\end{prop}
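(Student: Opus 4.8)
The plan is to transcribe, at the empirical level, the chain of inequalities leading to~\eqref{eq-decroissante}--\eqref{eq:f^t} in the oracle case — replacing the reference measure $g_0(x)\,dx$ by the empirical measure $n^{-1}\sum_{i=1}^{n}\delta_{X_i}$ — and then to supply a uniform (in $t$) lower bound on the resulting normalising constant, which is the only step calling for a genuinely new argument. Throughout, $\theta\in(0,1)$ denotes the fixed value of the mixture proportion used in the procedure (written $\hat\theta_n$ in~\eqref{eq:weight_msl}).

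First I would record the facts that make the computation meaningful. Reading $\hat f^{(t)}$ as the function $x\mapsto\sum_j\hat\omega_j^{(t-1)}\tilde K_{j,h}(x)/\sum_k\hat\omega_k^{(t-1)}$, the update~\eqref{eq:weight_msl} keeps every weight in $(0,1)$ (since $\mathcal{N}\hat f^{(t)}>0$ and $\theta\in(0,1)$), so each $\hat f^{(t)}$ is a convex combination of the probability densities $\tilde K_{j,h}$ and is itself a density on $[0,1]$. Because $K$ is positive, for $x,X_j\in[0,1]$ the numerator of $\tilde K_{j,h}(x)=K_h(x-X_j)/\int_0^1 K_h(s-X_j)\,ds$ is bounded below by a constant $m=m(h,K)>0$ while its denominator is at most $\int_{\mathbb{R}}K_h=1$; hence $m\le\hat f^{(t)}\le M$ on $[0,1]$, uniformly in $t$. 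In particular $\log\hat f^{(t)}\in\mathbb{L}_1([0,1])$, so $\mathcal{N}\hat f^{(t)}$ is well defined, and since $\mathcal{S}^*(\log\hat f^{(t)})(x)$ is an average of the values of $\log\hat f^{(t)}$ against the probability density $u\mapsto K_h(u-x)/\int_0^1 K_h(s-x)\,ds$, we also get $\mathcal{N}\hat f^{(t)}(x)=\exp\{\mathcal{S}^*(\log\hat f^{(t)})(x)\}\ge m$ on $[0,1]$.

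Then I would run the descent computation, mirroring~\eqref{eq-decroissante}. Writing $\hat\omega_i^{(t)}$ for the weight in~\eqref{eq:weight_msl}, the ratio $(\theta+(1-\theta)\mathcal{N}\hat f^{(t+1)}(X_i))/(\theta+(1-\theta)\mathcal{N}\hat f^{(t)}(X_i))$ equals $1-\hat\omega_i^{(t)}+\hat\omega_i^{(t)}\mathcal{N}\hat f^{(t+1)}(X_i)/\mathcal{N}\hat f^{(t)}(X_i)$, so concavity of $\log$, followed by $\log\mathcal{N}f=\mathcal{S}^*(\log f)$ and linearity of $\mathcal{S}^*$, yields
\[
l_n(\hat f^{(t)})-l_n(\hat f^{(t+1)})=\frac1n\sum_{i=1}^{n}\log\frac{\theta+(1-\theta)\mathcal{N}\hat f^{(t+1)}(X_i)}{\theta+(1-\theta)\mathcal{N}\hat f^{(t)}(X_i)}\ \ge\ \frac1n\sum_{i=1}^{n}\hat\omega_i^{(t)}\,\mathcal{S}^*\!\Big(\log\frac{\hat f^{(t+1)}}{\hat f^{(t)}}\Big)(X_i).
\]
Spelling $\mathcal{S}^*$ out and interchanging the finite sum with the integral (Fubini; all integrands bounded on $[0,1]$) rewrites the right-hand side as $\int_0^1 W(u)\log(\hat f^{(t+1)}(u)/\hat f^{(t)}(u))\,du$, with $W(u)=n^{-1}\sum_i\hat\omega_i^{(t)}K_h(u-X_i)/\int_0^1 K_h(s-X_i)\,ds=n^{-1}\sum_i\hat\omega_i^{(t)}\tilde K_{i,h}(u)$ by the very definition~\eqref{eq:tildeKih} of $\tilde K_{i,h}$. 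With $\hat\alpha_t^{-1}:=n^{-1}\sum_k\hat\omega_k^{(t)}$, the update of $\hat f$ in Algorithm~\ref{algo:iterative} gives $W=\hat\alpha_t^{-1}\hat f^{(t+1)}$ — the Monte Carlo analogue of~\eqref{eq:f^t} — whence, using that $\hat f^{(t+1)}$ and $\hat f^{(t)}$ both integrate to $1$ (so the $\int_0^1(\hat f^{(t)}-\hat f^{(t+1)})$ terms in $D$ cancel),
\[
l_n(\hat f^{(t)})-l_n(\hat f^{(t+1)})\ \ge\ \hat\alpha_t^{-1}\int_0^1\hat f^{(t+1)}(u)\log\frac{\hat f^{(t+1)}(u)}{\hat f^{(t)}(u)}\,du=\hat\alpha_t^{-1}\,D(\hat f^{(t+1)}\mid\hat f^{(t)})\ \ge\ 0.
\]

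It then remains to bound $\hat\alpha_t^{-1}$ below uniformly in $t$. Since $y\mapsto(1-\theta)y/(\theta+(1-\theta)y)$ is increasing and $\mathcal{N}\hat f^{(t)}\ge m$ by the second paragraph, each $\hat\omega_i^{(t)}\ge c:=(1-\theta)m/(\theta+(1-\theta)m)>0$, so $\hat\alpha_t^{-1}=n^{-1}\sum_k\hat\omega_k^{(t)}\ge c$, a constant depending only on $\theta$, $h$ and $K$ — this is precisely the ``same argument'' invoked after~\eqref{eq:f^t} for the oracle sequence. Combined with the previous display this gives $l_n(\hat f^{(t)})-l_n(\hat f^{(t+1)})\ge c\,D(\hat f^{(t+1)}\mid\hat f^{(t)})\ge0$, which is the claim. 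I expect the uniform lower bound to be the main obstacle: it is where positivity of the kernel genuinely enters, since for a compactly supported $K$ the iterates $\hat f^{(t)}$ may vanish on part of $[0,1]$, $\log\hat f^{(t)}$ then fails to be integrable, $\mathcal{N}\hat f^{(t)}$ is no longer bounded away from $0$, and the constant $m$ degenerates; the rest is a faithful copy of the oracle computation already displayed in the text.
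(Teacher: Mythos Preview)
Your proof is correct and follows essentially the same route as the paper's: you transcribe the oracle computation~\eqref{eq-decroissante}--\eqref{eq:f^t} with the empirical measure in place of $g_0(x)\,dx$, and then lower-bound $\hat\alpha_t^{-1}=n^{-1}\sum_k\hat\omega_k^{(t)}$ via $\hat f^{(t)}\ge m$, $\mathcal{N}\hat f^{(t)}\ge m$, and monotonicity of $y\mapsto(1-\theta)y/(\theta+(1-\theta)y)$, exactly as the paper does. Your write-up is in fact more explicit than the paper's (which just invokes ``the same arguments as for obtaining~\eqref{eq-decroissante}'' and the monotonicity of $\mathcal{N}$), and your remark about compactly supported kernels is a valid caveat the paper does not make.
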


To conclude this section, we study the behavior of the limiting criterion $l$.
Let us  introduce the set 
\[
\mathcal{B}=\{\mathcal{S} \varphi ; \varphi \ \mbox{density on } [0,1]\}. 
\]

\begin{prop}\label{prop:subsequence}
The criterion $l$ has a unique minimum $f^\star$ on $\mathcal{B}$. Moreover, if there exists a constant $L$ depending on $h$ such that for all $x, y \in [-1,1]$
\begin{equation*}
| K_h(x) -K_h(y) | \leq L |x-y|,
\end{equation*}
then the sequence of densities $\{f^t\}_{t\ge 0}$ 
converges uniformly to  $f^\star$. 
\end{prop}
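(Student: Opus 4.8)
The plan is to treat the two assertions separately. For the uniqueness of the minimizer on $\mathcal{B}$, I would first re-express the criterion in a more tractable form. Since any $b\in\mathcal{B}$ is of the form $b=\mathcal{S}\varphi$ for a density $\varphi$, and since $\mathcal{S}^*$ is the adjoint of $\mathcal{S}$, one checks that $\mathcal{N}(\mathcal{S}\varphi)(x)=\exp\{\mathcal{S}^*(\log \mathcal{S}\varphi)(x)\}$ and that on the set $\mathcal{B}$ the penalization term $1-\int_0^1\mathcal{N}f$ behaves well; the key is that $l$ is then a strictly convex functional of the argument in a suitable sense. More precisely, I would write $l(\theta,f)=D(g_0\mid \theta+(1-\theta)\mathcal{N}f)+(1-\theta)(1-\int_0^1\mathcal{N}f)$ as established above, note that $f\mapsto \mathcal{N}f$ composed with $f\mapsto \theta+(1-\theta)\mathcal{N}f$ is such that $a\mapsto D(g_0\mid a)$ is strictly convex in $a$, and that the map $\varphi\mapsto\mathcal{N}(\mathcal{S}\varphi)$ is injective and "log-affine" along the relevant directions; strict convexity of the Kullback-Leibler divergence in its second argument together with injectivity of $\varphi\mapsto\mathcal{S}\varphi$ then forces uniqueness. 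Existence of a minimizer follows from a compactness/lower-semicontinuity argument: $\mathcal{B}$ is a convex set of uniformly bounded, equicontinuous densities (because $\mathcal{S}\varphi$ is a smoothing of a probability measure by the fixed kernel $K_h$), hence relatively compact in the uniform topology by Arzelà–Ascoli, and $l$ is lower semicontinuous along uniformly convergent sequences by Fatou's lemma; so a minimizer $f^\star$ exists and is unique.

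For the convergence of $\{f^t\}_{t\ge 0}$ to $f^\star$, I would exploit the descent property~\eqref{eq:creasing}, which gives that $\{l(f^t)\}_{t\ge 0}$ is nonincreasing, hence convergent (it is bounded below by $l(f^\star)$, or more crudely by $0$ if one uses the penalized-divergence form with the penalty controlled), and consequently $D(f^{t+1}\mid f^t)\to 0$. Next I would observe that each $f^t$ (for $t\ge 1$) lies in $\mathcal{B}$: indeed, from the explicit formula~\eqref{eq:f^t}, $f^{t+1}$ is $\mathcal{S}$ applied to the density $u\mapsto \alpha_t \omega_t(u)g_0(u)$ (up to checking the normalisation, which is exactly the role of $\alpha_t$), so $\{f^t\}_{t\ge 1}\subset\mathcal{B}$, a relatively compact family in the uniform topology as above. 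Take any uniformly convergent subsequence $f^{t_k}\to \bar f\in\mathcal{B}$. The Lipschitz hypothesis on $K_h$ is what allows me to pass to the limit in the update map~\eqref{eq:f^t}: it makes the map $\varphi\mapsto \mathcal{S}(\omega(\cdot)g_0)$ (with $\omega$ depending continuously on $f$ through $\mathcal{N}f$) continuous for the uniform norm, so that the limit $\bar f$ is a fixed point of the iteration, and $D(f^{t_k+1}\mid f^{t_k})\to 0$ with $f^{t_k+1}\to\bar f$ too forces the fixed-point relation to be consistent. Then I would argue that a fixed point of the oracle iteration in $\mathcal{B}$ is necessarily a stationary point of $l$ restricted to $\mathcal{B}$, and by the strict convexity established in the first part, the unique such point is $f^\star$; hence every convergent subsequence has the same limit $f^\star$, and by relative compactness the whole sequence converges uniformly to $f^\star$.

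The main obstacle I anticipate is the first part — establishing that $l$ is genuinely strictly convex (or at least has a unique stationary point) on $\mathcal{B}$, because the composition $\varphi\mapsto \theta+(1-\theta)\exp\{\mathcal{S}^*(\log\mathcal{S}\varphi)\}$ is not linear and the penalization term $1-\int_0^1\mathcal{N}f$ is not obviously convex in a way that helps. I would handle this by changing variables to $\psi=\log f$ (legitimate on $\mathcal{F}$), writing $\mathcal{N}f=\exp(\mathcal{S}^*\psi)$, noting that $\psi\mapsto \mathcal{S}^*\psi$ is linear and $a\mapsto D(g_0\mid \theta+(1-\theta)e^a)$ together with the penalty $a\mapsto -\int e^a$ combine into a strictly convex functional of $a=\mathcal{S}^*\psi$, and finally using that $\mathcal{S}^*$ is injective on the relevant cone (which follows from the kernel $K_h$ being positive, so that $\mathcal{S}^*$ has a strictly positive "kernel function" and cannot annihilate a nonzero continuous function). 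Composing a strictly convex functional with an injective linear map and then with the parametrization $\varphi\mapsto\log\mathcal{S}\varphi$ — which is itself injective since $\mathcal{S}$ is — yields a functional on $\mathcal{B}$ with at most one minimizer, completing the argument. The second part is then mostly bookkeeping once the Lipschitz hypothesis is used to secure continuity of the update map.
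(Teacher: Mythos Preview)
Your convergence argument via subsequences and fixed points is sound, and in some respects more careful than the paper's own (which asserts ``simple convergence to a local minimum'' rather quickly before invoking Arzelà--Ascoli). However, your uniqueness argument contains a genuine error.

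You propose to change variables to $a=\mathcal{S}^*\psi=\mathcal{S}^*\log f$ and then claim that $a\mapsto D(g_0\mid \theta+(1-\theta)e^a)$ together with the penalty is strictly convex in $a$. This is false: a direct computation gives
\[
\frac{d^2}{dt^2}\Bigl[-\log\bigl(\theta+(1-\theta)e^t\bigr)\Bigr]=-\frac{\theta(1-\theta)e^t}{\bigl(\theta+(1-\theta)e^t\bigr)^2}<0,
\]
so the integrand is \emph{concave} in $a(x)$ pointwise, and hence $l$ is concave (not convex) as a functional of $a$. Your injectivity-of-$\mathcal{S}^*$ argument therefore proves nothing about minimizers.

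The paper's route, citing \cite{Eggermont99}, is to establish convexity directly in $f$. The point is that $\mathcal{N}f(x)=\exp\bigl(\int p(u\mid x)\log f(u)\,du\bigr)$ is a weighted geometric mean of $f$ and hence \emph{concave} in $f$; composing with the convex \emph{decreasing} map $t\mapsto -\log(\theta+(1-\theta)t)$ then yields convexity of $f\mapsto l(f)$ on the convex set $\mathcal{B}$. Strictness follows from strict convexity of $-\log$ and the fact that distinct $f_0,f_1\in\mathcal{B}$ give distinct $\mathcal{N}f_0,\mathcal{N}f_1$.

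A second, smaller difference: the paper uses the Lipschitz hypothesis on $K_h$ for one purpose only --- the one-line estimate $|f^t(x)-f^t(y)|\le (L/m)|x-y|$, valid uniformly in $t$, which immediately gives equicontinuity of $\{f^t\}$ and hence Arzelà--Ascoli. You instead invoke the Lipschitz hypothesis to make the update map continuous and claim equicontinuity of $\mathcal{B}$ separately ``because $\mathcal{S}\varphi$ is a smoothing''; that separate claim also requires the Lipschitz (or at least uniform continuity) assumption on $K_h$, so you are using it twice where once suffices.
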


Note that the  previous assumption may be satisfied  by many different
kernels. For  instance, if $K$ is  the density of  the standard normal
distribution, then this assumption is satisfied with 
\begin{equation*}
L=\frac{1}{h^2\sqrt{2\pi}} e^{-1/2}.
\end{equation*}

As a consequence and since $l_n$ is lower bounded, the sequence $\{\hat f^{(t)}\}_{t\ge 0}$ converges to a local minimum of $l_n$ as $t$ increases. Moreover, we recall that as the sample size $n$ increases, the criterion $l_n$ converges (up to a constant) to $l$. Thus, the outcome of Algorithm~\ref{algo:iterative} that relies on Equations~\eqref{eq:tildeKih} and~\eqref{eq:weight_msl} is an approximation of the minimizer $f^\star$ of $l$.

\section{Estimation of local false discovery rate and simulation
  study}\label{sec:lfdr}
\subsection{Estimation of local false discovery rate}
In this section, we study the estimation of local false
discovery rate  (\lfdr) by using the  previously introduced estimators
of the density $f$ and compare these different approaches on simulated
data. Let us recall definition~\eqref{eq-lfdr} of the local false
discovery rate
\begin{equation*}
\lfdr(x) = \mathbb{P}( H_i\ \text{being true}\ | X=x) =
\frac{\theta }{\theta +(1-\theta)f(x)}, \quad x \in [0,1].
\end{equation*}
For a given estimator  $\hat{\theta}$ of the proportion $\theta$ and
an estimator $\hat{f}$ of the density $f$, we obtain a natural
estimator of the local false discovery rate for observation $x_i$  
\begin{equation}\label{eq:estim_lfdr}
\widehat{\lfdr}(x_i) =\frac{\hat{\theta}}{\hat{\theta} +(1-\hat{\theta})\hat{f}(x_i)}.
\end{equation}
Let us now denote by $\hat{f}_{\text{rwk}}$ the randomly weighted kernel estimator of $f$ constructed in
Section~\ref{sec:direct},  by  $\hat{f}_{\text{kerfdr}}$   the  estimator  of  $f$
presented  in Algorithm~\ref{algo:iterative}  and  by $\hat{f}_{\text{msl}}$  the maximum 
smoothed   likelihood     estimator   of   $f$   presented   in
Algorithm~\ref{algo:iterative}. Note that $\hat{f}_{\text{kerfdr}}$  is available through the R package \texttt{kerfdr}.  
 We also  let $\widehat{\lfdr}_m,  m \in
\{ \text{rwk}, \text{kerfdr},\text{msl}\}$ be the
 estimators  of $\lfdr$  induced by  a plug-in of  estimators $\hat
 f_m$ in~\eqref{eq:estim_lfdr} and $\widehat{\lfdr}_{st}$ be the estimator  of $\lfdr$ computed by the method of \cite{Strimmer2008}. We
compute the root mean squared error (RMSE) between the
estimates and the true values
\begin{equation*}
\mbox{RMSE}_m
=\frac{1}{S}\sum_{s=1}^S\sqrt{\frac{1}{n}\sum_{i=1}^n\{ \widehat{\lfdr}_m^{(s)}(x_i)-\lfdr(x_i)\}^2},
\end{equation*}
for $m \in \{ \text{rwk}, \text{kerfdr},\text{msl}, \text{st}\}$ and where
$s=1,\ldots,S$ denotes  the simulation index ($S$ being the total
number of repeats). We also compare $\mathbb{L}^2$-norms between $\hat{f}_m$ and $f$ for $m \in \{ \text{rwk}, \text{kerfdr},\text{msl}\}$, relying on the root mean integrated squared error 
\begin{equation*}
\mbox{RMISE}_m
=\frac{1}{S}\sum_{s=1}^S\sqrt{\int_0^1 [\hat{f}_m^{(s)}(u)-f(u)]^2du}.
\end{equation*}
 The quality of the estimates provided by method $m$
is measured by the mean $\mbox{RMSE}_m$ or $\mbox{RMISE}_m$: the smaller these quantities,  the
better the performances of the method.

We mention that we also tested the naive method described in Section~\ref{sec:direct} and the results were bad. In order to present clear figures, we have chosen not to show those.

\subsection{Simulation study}
In this section, we give an illustration of the previous results on
some simulated experiments. We simulate sets of $p$-values according
to the mixture model~\eqref{eq-model-pvalue}. We consider  three
different cases for the alternative distribution $f$ and two
different values for the proportion: $\theta = 0.65$ and $0.85$. In the first case, we simulate $p$-values under the 
alternative with distribution 
\begin{equation*}
f(x)=\rho \Big(1-x\Big)^{\rho-1}\mathbf{1}_{[0,1]}(x),
\end{equation*}
where $\rho =4$, as proposed in \cite{Celisse-Robin2010}. In the second case, the $p$-value corresponds
to the statistic $T$ which has a mixture distribution $\theta
\mathcal{N}(0,1) + (1-\theta) \mathcal{N} (\mu, 1)$, with $\mu=2$. In the third case, the $p$-value corresponds to the statistic $T$ which has a mixture density $\theta
(1/2)\exp\{-|t|\} + (1-\theta) (1/2)\exp\{-|t-\mu|\}$, with
$\mu=1$. The $p$-values densities obtained with those three models are given in Figure~\ref{fig:densities} for $\theta=0.65$.

\begin{figure}[htbp]
  \centering
  \includegraphics[width=0.8\columnwidth]{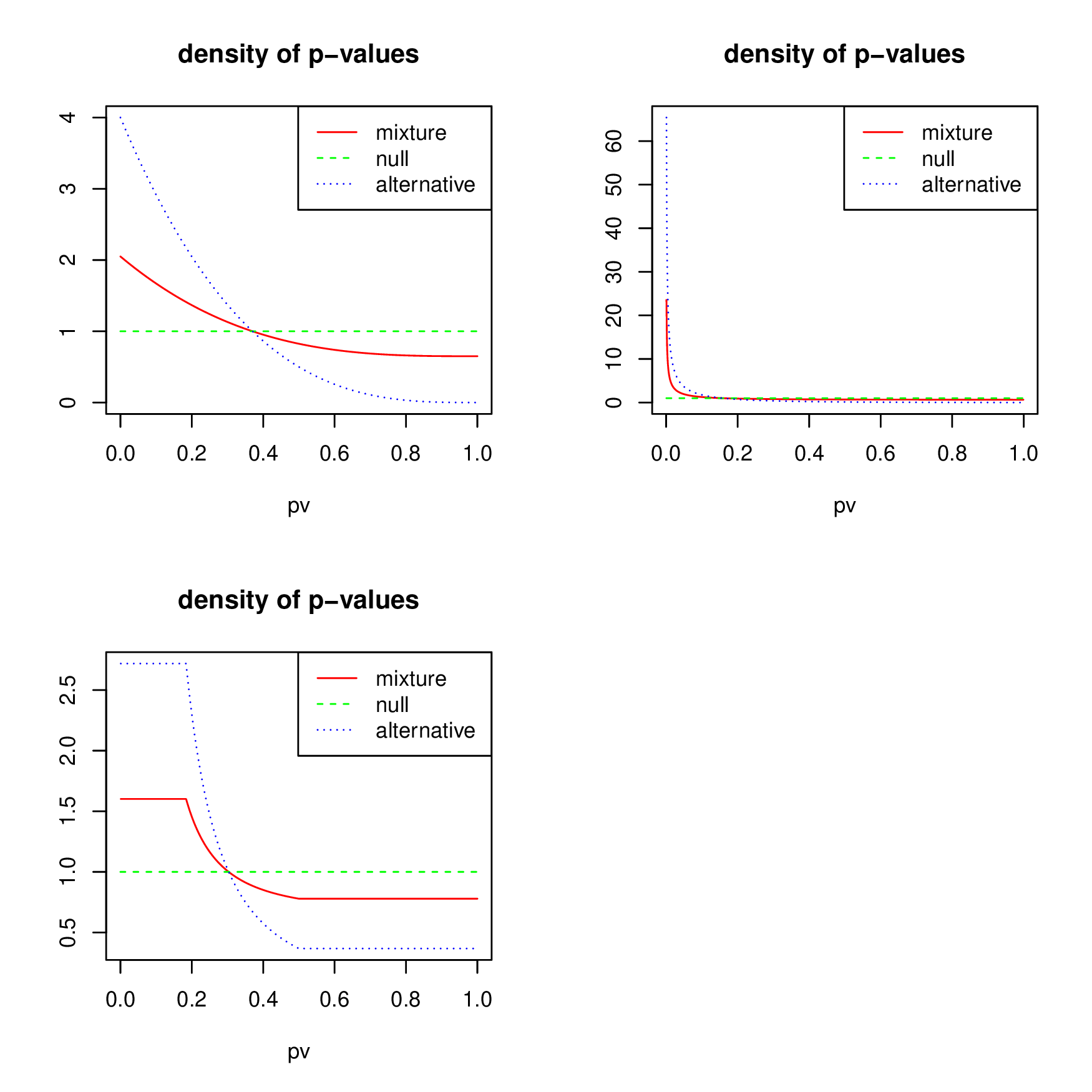}
  \caption{Densities of the $p$-values in the three different models, with $\theta=0.65$.  Top left: first model, top right: second model, bottom left: third model.}
  \label{fig:densities}
\end{figure}

For each of the $3\times 2 =6$ configurations, we generate
$S =100$ samples of size $n \in \{500, 1000, 2000, 5000 \}$. In these experiments, we choose to consider the estimator of $\theta$ initially proposed by \cite{Schweder1982}, namely
\begin{equation*}
\hat{\theta} = \frac{\#\{X_i>\lambda; i=1,\ldots,n\}}{n(1-\lambda)},
\end{equation*}
with parameter value $\lambda$ optimally chosen by bootstrap method, as recommended by \cite{Storey2002}. The kernel is chosen with compact support, for example the triangular kernel or the rectangular kernel. The bandwidth is selected according to a rule of thumb due to \cite[][Section 3.4.2]{Silverman1986}, 
\begin{equation*}
h = 0.9 \min \Big\{SD, \frac{IQR}{1.34}\Big\} n^{-1/5},
\end{equation*}
where $SD$ and $IQR$ are respectively the standard deviation and interquartile range of the data values. Figures~\ref{fig:illustration1}, ~\ref{fig:illustration2} and ~\ref{fig:illustration3}  show the RMISEs and the RMSEs for the six configurations and the four different methods.

\begin{figure}[htbp]
\begin{center}
\includegraphics[width=0.8\columnwidth]{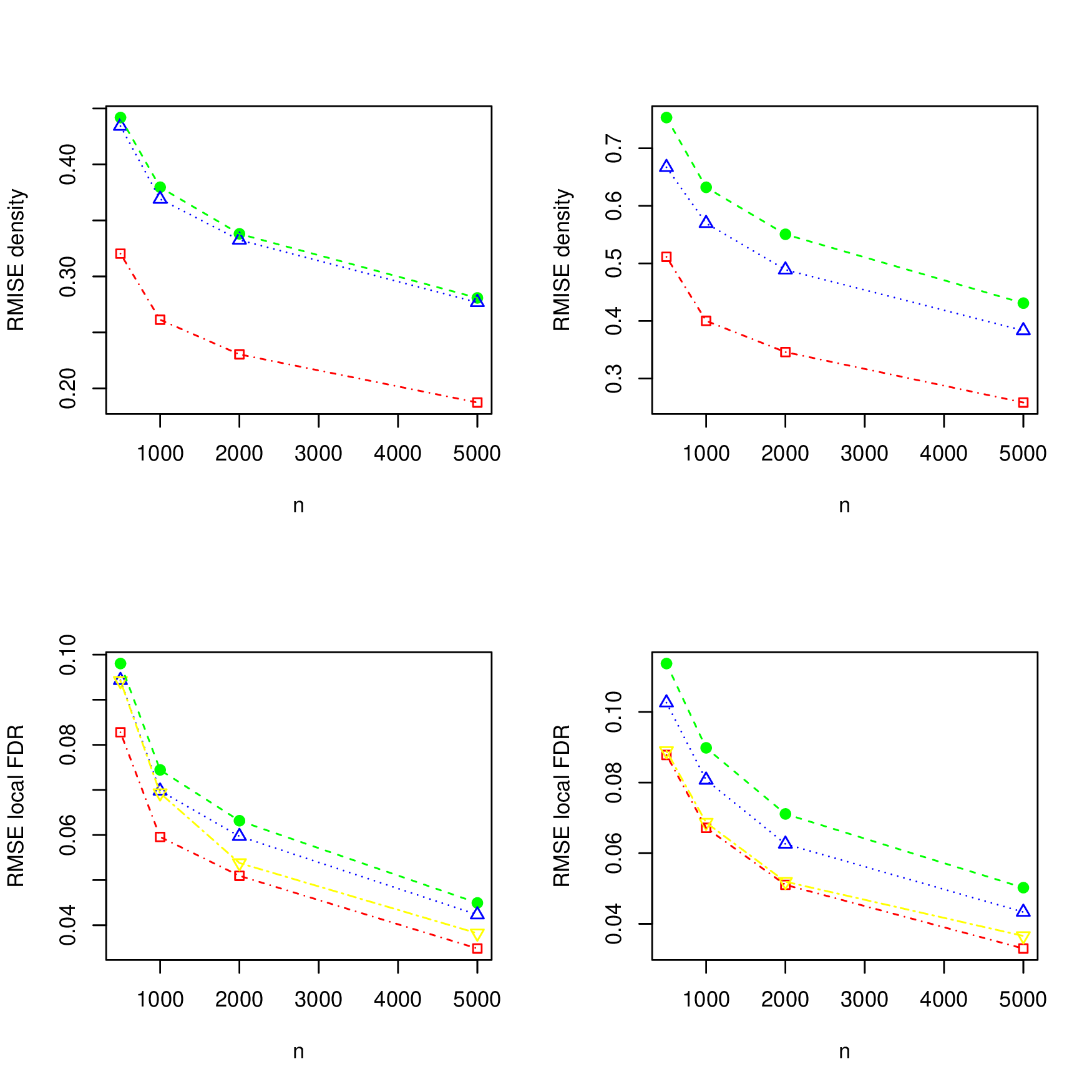}
\caption{RMISE (for density $f$) and RMSE (for \lfdr) in the first model as a function of $n$. Methods: "$\bullet$" =  rwk, "$\triangle$" =  \texttt{kerfdr},  "$\Box$" = msl, "$\triangledown$" = st (only for \lfdr). Left: $\theta = 0.65$, right: $\theta = 0.85$.}
\label{fig:illustration1}
\end{center}
\end{figure}

\begin{figure}[htbp]
\begin{center}
\includegraphics[width=0.8\columnwidth]{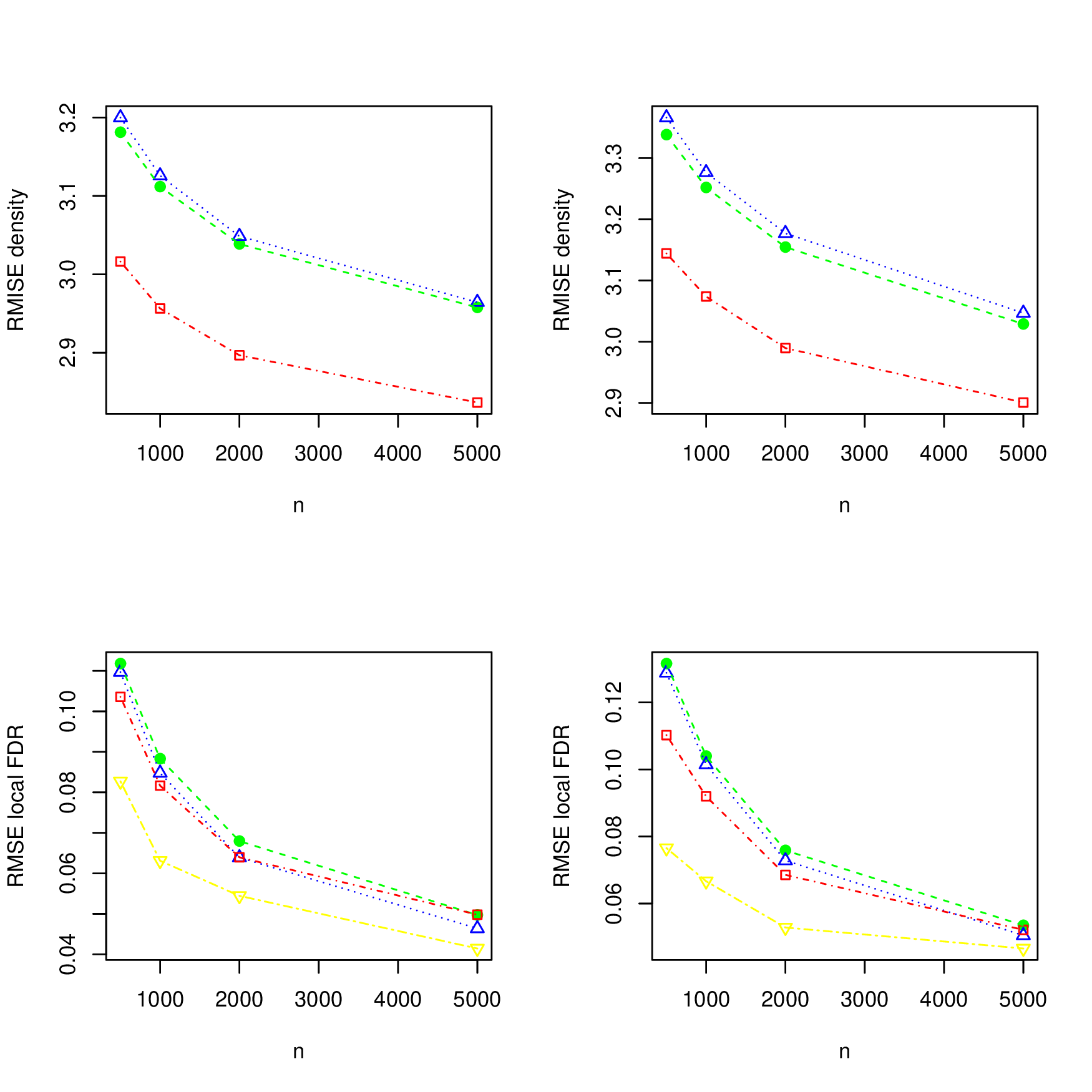}
\caption{RMISE (for density $f$) and RMSE (for \lfdr) in the second model as a function of $n$. Methods: "$\bullet$" =  rwk, "$\triangle$" =  \texttt{kerfdr},  "$\Box$" = msl,  "$\triangledown$" = st (only for \lfdr). Left: $\theta = 0.65$, right: $\theta = 0.85$.}
\label{fig:illustration2}
\end{center}
\end{figure}

\begin{figure}[htbp]
\begin{center}
\includegraphics[width=0.8\columnwidth]{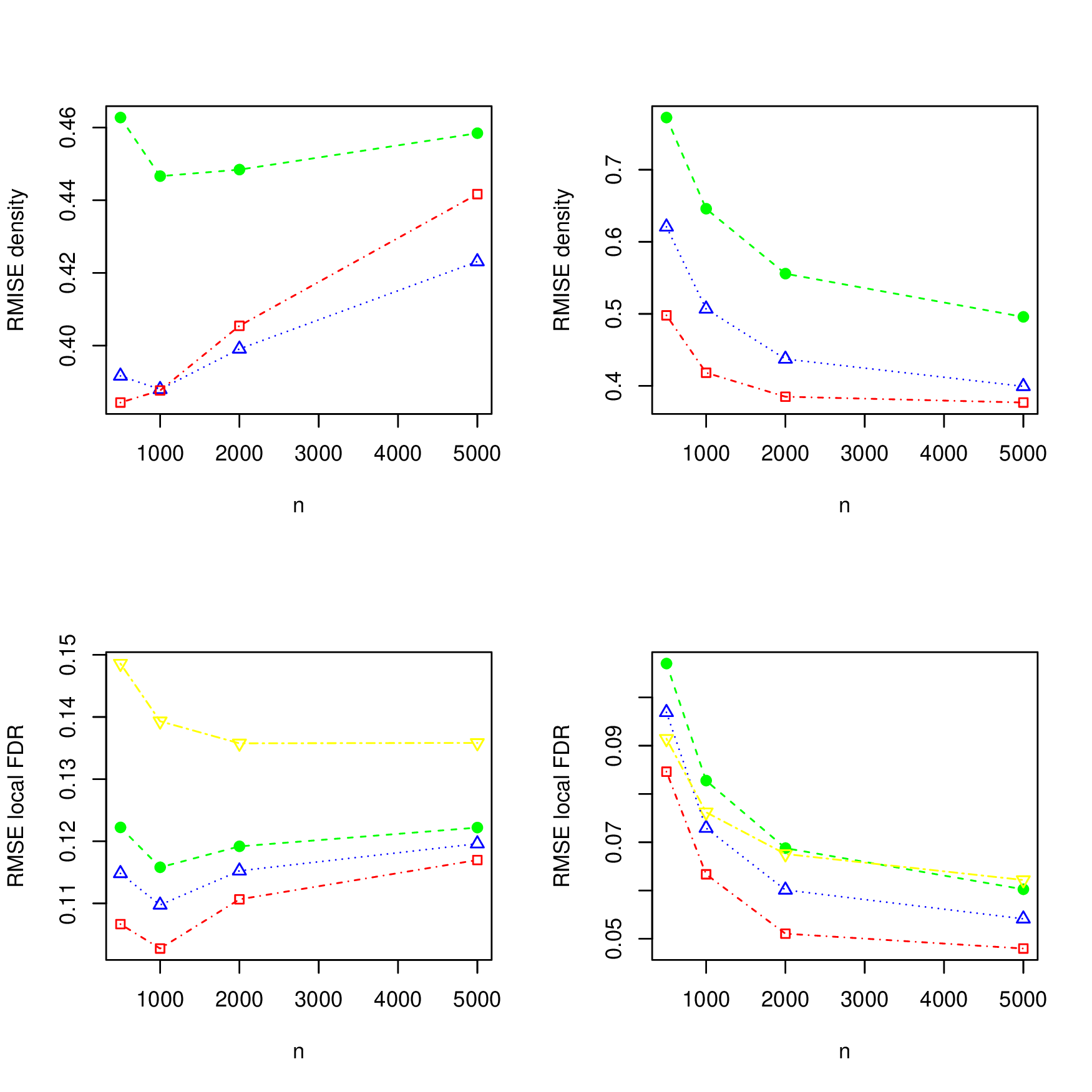}
\caption{RMISE (for density $f$) and RMSE (for \lfdr)  in the third model as a function of $n$. Methods: "$\bullet$" =  rwk, "$\triangle$" =  \texttt{kerfdr}, "$\Box$" = msl, "$\triangledown$" = st  (only for \lfdr). Left: $\theta = 0.65$, right: $\theta = 0.85$.}
\label{fig:illustration3}
\end{center}
\end{figure}

We first comment the results on the estimation of $f$ (top half of each figure). Except for model 2, the RMISEs obtained are small for all the three procedures. Model 2 exhibits a rather high RMISEs and this may be explained by the fact that density $f$ is not bounded near $0$ in this case.  
We note that the methods \texttt{rwk} and \texttt{kerfdr} have very similar performances, except in the third model where \texttt{kerfdr} seems to slightly outperform \texttt{rwk}. Let us recall that we introduced this latter method only as a way of approaching the theoretical performances of \texttt{kerfdr} method.
Now, in five out of the six configurations, \texttt{msl} outperforms the two other methods (\texttt{rwk, kerfdr}).

Then, we switch to comparing the methods with respect to estimation of \lfdr\ (bottom half of each figure). 
First, note  that the four methods  exhibit small RMSEs with respect to \lfdr\  and are thus efficient for estimating this quantity. We also note that \texttt{rwk} tends to have lower performances than \texttt{kerfdr,msl}.  Now, \texttt{msl} tends to slightly outperform \texttt{kerfdr}. Thus \texttt{msl} appears as a competitive method for \lfdr\ estimation. The comparison with \cite{Strimmer2008}'s approach is more difficult: for model 1, the method compares with \texttt{msl}, while it outperforms all the methods in model 2 and is outperformed by \texttt{msl} in model 3. 

As a conclusion, we claim that \texttt{msl} is a competitive method for estimating both the alternative density $f$ and the \lfdr.

\section{Proofs}\label{sec:proofs}
\subsection{Proof of Theorem~\ref{thm:cv_f}}\label{ann:thm1}
The proof works as follows: 
we  first  start by  proving  that  the  pointwise quadratic  risk  of
function $f_2$ defined by~\eqref{eq:f2} is order of $n^{-2\beta/(2\beta + 1)}$ in the
following proposition. Then we
compare the  estimator $\hat{f}_n^{\text{rwk}}$ with  the function $f_2$  to conclude
the     proof.     To     simplify     notation,     we     abbreviate
$\hat{f}_n^{\text{rwk}}$ to $\hat f_n$. 

We shall need the following two lemmas. The proof of the first one may be found for
instance in Proposition 1.2 in \cite{Tsybakov-book}. The second one is
known as Bochner's  lemma and is a classical  result in kernel density
estimation. Therefore its proof is omitted. 
\begin{lem} \textbf{(Proposition 1.2 in \cite{Tsybakov-book}).}
\label{lem1}
Let $p$ be a density in $\Sigma(\beta,L)$ and $K$ a kernel function of order $l
= \lfloor \beta \rfloor$ such that 
\begin{equation*}
\int_{\mathbb{R}}| u| ^{\beta}| K(u)| du < \infty .
\end{equation*}
Then there exists a positive constant $C_3$ depending only on $\beta, L$ and $K$ such that for all $x_0 \in \mathbb{R}$,
\begin{equation*}
\Big|\int_{\mathbb{R}}K(u)\big[p(x_0 + uh) - p(x_0)\big] du\Big| \leq
C_3h^{\beta},\quad \forall h > 0.
\end{equation*}
\end{lem}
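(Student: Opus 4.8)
The plan is to reduce the estimate to a Taylor expansion of $p$ around $x_0$ together with the vanishing-moment property of $K$. Since $p\in\Sigma(\beta,L)$, it is $l$-times continuously differentiable with $l=\lfloor\beta\rfloor$ and its $l$-th derivative is $(\beta-l)$-Hölder with constant $L$; note $\beta-l\in(0,1]$. First I would write, for each fixed $u\in\mathbb{R}$ and $h>0$, the Taylor formula with Lagrange remainder at order $l$,
\[
p(x_0+uh)=\sum_{k=0}^{l-1}\frac{(uh)^k}{k!}p^{(k)}(x_0)+\frac{(uh)^l}{l!}p^{(l)}\bigl(x_0+\tau_u uh\bigr),
\]
for some $\tau_u=\tau_u(x_0,h)\in(0,1)$ (for $\beta\le 1$ we have $l=0$, the sum is empty, and the argument below collapses to the Hölder bound on $p$ itself). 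Subtracting $p(x_0)$ and adding and subtracting $\frac{(uh)^l}{l!}p^{(l)}(x_0)$ yields
\[
p(x_0+uh)-p(x_0)=\sum_{k=1}^{l}\frac{(uh)^k}{k!}p^{(k)}(x_0)+\frac{(uh)^l}{l!}\Bigl[p^{(l)}\bigl(x_0+\tau_u uh\bigr)-p^{(l)}(x_0)\Bigr].
\]

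Next I would multiply by $K(u)$ and integrate over $u\in\mathbb{R}$. Because $K$ is a kernel of order $l$, one has $\int_{\mathbb{R}}u^kK(u)\,du=0$ for every $k=1,\dots,l$, so the polynomial part contributes $\sum_{k=1}^{l}\frac{h^k}{k!}p^{(k)}(x_0)\int_{\mathbb{R}}u^kK(u)\,du=0$ (the integrability of $u\mapsto u^kK(u)$ for $k\le l$ follows from $\int|K|<\infty$ and $\int|u|^\beta|K(u)|\,du<\infty$ since $\beta\ge l$, which also justifies splitting the integral). It then remains to bound the remainder term, where the Hölder regularity of $p^{(l)}$ enters: using $|p^{(l)}(x_0+\tau_u uh)-p^{(l)}(x_0)|\le L|\tau_u uh|^{\beta-l}\le L|uh|^{\beta-l}$ and $|u|^l|u|^{\beta-l}=|u|^\beta$ I obtain
\[
\Bigl|\int_{\mathbb{R}}K(u)\bigl[p(x_0+uh)-p(x_0)\bigr]\,du\Bigr|\le\frac{L}{l!}\,h^\beta\int_{\mathbb{R}}|u|^{\beta}|K(u)|\,du,
\]
so the claim holds with $C_3=\frac{L}{l!}\int_{\mathbb{R}}|u|^{\beta}|K(u)|\,du$, finite by hypothesis and depending only on $\beta,L$ and $K$.

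There is essentially no hard step here — this is the classical bias computation for higher-order kernels — so the only points requiring care are bookkeeping ones: writing the Taylor remainder in a form that isolates the term $\frac{(uh)^l}{l!}p^{(l)}(x_0)$ (so that the order-$l$ moment cancellations apply) and checking integrability before splitting the integral into its polynomial and remainder parts. A minor caveat is that $p$ is a priori defined only on $[0,1]$; I would simply work with a fixed $(\beta,L)$-Hölder extension of $p$ to $\mathbb{R}$ (or, as in the application, restrict to $x_0$ and $x_0+uh$ in $[0,1]$), which changes nothing since $C_3$ is uniform over $\Sigma(\beta,L)$.
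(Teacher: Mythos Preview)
Your argument is correct and is precisely the standard proof of Proposition~1.2 in Tsybakov's book, which is what the paper cites without reproducing. The paper gives no independent proof of this lemma, so there is nothing further to compare.
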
 
\begin{lem} 
\label{lem:Bochner}
\textbf{(Bochner's lemma).} Let $g$ be a bounded function on
$\mathbb{R}$, continuous in a neighborhood of $x_0 \in \mathbb{R}$ and
$Q$ a function which satisfies
\begin{equation*}
\int_{\mathbb{R}}| Q(x)| dx < \infty .
\end{equation*}
Then, we have
\begin{equation*}
\displaystyle\lim_{h\to      0}\frac      {1}     {h}\int_{\mathbb{R}}
Q\Big( \frac{x - x_0}{h}\Big) g(x)dx
= g(x_0)\int_{\mathbb{R}}Q(x)dx .
\end{equation*}
\end{lem}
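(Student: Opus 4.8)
The plan is to reduce the statement to a straightforward application of the dominated convergence theorem after a single change of variables. First I would substitute $u = (x-x_0)/h$, so that $x = x_0 + uh$ and $dx = h\,du$, which turns the quantity of interest into
\[
\frac{1}{h}\int_{\mathbb{R}} Q\Big(\frac{x-x_0}{h}\Big)\, g(x)\,dx = \int_{\mathbb{R}} Q(u)\, g(x_0 + uh)\,du .
\]
Since $\int_{\mathbb{R}} Q(u)\,du$ is finite by hypothesis, it then suffices to establish
\[
\lim_{h\to 0}\int_{\mathbb{R}} Q(u)\,\big[g(x_0 + uh) - g(x_0)\big]\,du = 0 ,
\]
because the left-hand side equals $\int Q(u)g(x_0+uh)\,du - g(x_0)\int Q(u)\,du$.

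For the convergence itself I would invoke dominated convergence through its sequential formulation: fix an arbitrary sequence $h_k \to 0$ and examine the integrand $Q(u)\,[g(x_0 + u h_k) - g(x_0)]$. Writing $M = \sup_{\mathbb{R}}|g| < \infty$, this integrand is dominated for every $k$ by the fixed function $2M\,|Q(u)|$, which lies in $\mathbb{L}_1(\mathbb{R})$ by the assumption on $Q$. For each fixed $u$, continuity of $g$ at $x_0$ gives $g(x_0 + u h_k)\to g(x_0)$, so the integrand tends to $0$ pointwise. The dominated convergence theorem then forces the integral to tend to $0$ along $(h_k)$; as the sequence was arbitrary, the full limit as $h\to 0$ is $0$.

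An equivalent, fully self-contained alternative avoids the sequential reduction by splitting $\mathbb{R} = \{|u|\le A\} \cup \{|u|>A\}$. On the tail, boundedness of $g$ bounds the contribution by $2M\int_{|u|>A}|Q(u)|\,du$, which is made arbitrarily small by choosing $A$ large, using $Q \in \mathbb{L}_1(\mathbb{R})$; on the central part, uniform continuity of $g$ on a compact neighborhood of $x_0$ makes $|g(x_0+uh)-g(x_0)|$ uniformly small over $|u|\le A$ once $h$ is small enough, so that piece is bounded by $\eta\int_{\mathbb{R}}|Q(u)|\,du$. Either way the argument is essentially routine; the only point deserving a little care is the passage to the limit over the continuous parameter $h$, which is precisely why I would phrase the dominated convergence step via sequences (or, equivalently, carry out the explicit truncation above). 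No part of the statement presents a genuine obstacle.
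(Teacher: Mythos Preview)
Your argument is correct and is the standard proof of Bochner's lemma: the change of variables followed by dominated convergence (or the equivalent truncation in $|u|$) is exactly how this is usually done, and every step is justified. The paper itself omits the proof entirely, stating only that it is ``a classical result in kernel density estimation,'' so there is nothing to compare against; your write-up would serve perfectly well as the missing proof.
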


Now, we come to the first step in the proof.
\begin{prop}
\label{prop:cv_f2}
Assume that kernel $K$ satisfies Assumption~\hyperref[hyp:A3]{\textbf{(A3)}} and  bandwidth $h=\alpha n^{-1/(2\beta + 1)}$, with
$\alpha > 0$. Then the
pointwise quadratic risk of function $f_2$, defined by~\eqref{eq:f2}
and depending on $(\theta,f)$, satisfies 
\begin{equation*}
\sup_{x \in [0,1]} \sup_{\theta\in [\delta,1-\delta]}\sup_{f \in \Sigma(\beta,L)} \esp_{\theta,f}(|f_2(x)-f(x)|^{2}) \leq C_4 n^{\frac{-2\beta}{2\beta + 1}},
\end{equation*}
where  $C_4$ is a positive  constant   depending  only  on
$\beta, L, \alpha, \delta$ and $K$.
\end{prop}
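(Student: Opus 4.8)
\emph{Proof sketch proposal.} The plan is a bias--variance analysis of the ratio $f_2(x)=A_n(x)/B_n$, where, writing $K_{i,h}$ as in~\eqref{eq:Kih}, $K_h(x)=h^{-1}K(h^{-1}x)$ and recalling that $\tau(x)=(1-\theta)f(x)/g(x)\in[0,1]$ (since $g\ge\theta$), one sets
\[
A_n(x)=\frac1n\sum_{i=1}^n\tau(X_i)K_{i,h}(x),\qquad B_n=\frac1n\sum_{k=1}^n\tau(X_k),
\]
so that $f_2(x)-f(x)=N_n(x)/B_n$ with $N_n(x):=\frac1n\sum_{i=1}^n\tau(X_i)\big(K_{i,h}(x)-f(x)\big)$. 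The crucial observation is that $\esp_{\theta,f}(B_n)=\int_0^1(1-\theta)f(u)\,du=1-\theta=:b\in[\delta,1-\delta]$, so the random denominator concentrates around a value bounded away from $0$ uniformly over $\theta\in[\delta,1-\delta]$ and $f\in\Sigma(\beta,L)$.

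First I would bound $\esp_{\theta,f}(N_n(x)^2)$ uniformly. Its mean equals $(1-\theta)\big((K_h\ast f)(x)-f(x)\big)$, whose modulus is at most $C_3h^\beta$ by Lemma~\ref{lem1} (applicable since $f\in\Sigma(\beta,L)$ and $K$ satisfies~\hyperref[hyp:A3]{\textbf{(A3)}}), uniformly in $x,\theta,f$. Its variance is at most $n^{-1}\esp_{\theta,f}\big((K_{1,h}(x)-f(x))^2\big)\le 2n^{-1}\big(\esp_{\theta,f}(K_{1,h}(x)^2)+\|f\|_\infty^2\big)$, and a change of variables (or Lemma~\ref{lem:Bochner}) gives $\esp_{\theta,f}(K_{1,h}(x)^2)=h^{-1}\int K^2(v)g(x-vh)\,dv\le h^{-1}\|g\|_\infty\int K^2$, with $\|g\|_\infty\le 1+\|f\|_\infty$ and $\sup_{f\in\Sigma(\beta,L)}\|f\|_\infty<\infty$. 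Hence $\esp_{\theta,f}(N_n(x)^2)\le C\,(h^{-1}n^{-1}+h^{2\beta})$, which under $h=\alpha n^{-1/(2\beta+1)}$ is $\le C\,n^{-2\beta/(2\beta+1)}$, uniformly in $x\in[0,1]$, $\theta\in[\delta,1-\delta]$, $f\in\Sigma(\beta,L)$.

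Next I would split on the event $\Omega_n=\{B_n\ge b/2\}$. Since $\tau(X_k)\in[0,1]$, Hoeffding's inequality gives $\pr_{\theta,f}(\Omega_n^c)\le\pr_{\theta,f}(|B_n-b|>\delta/2)\le 2e^{-n\delta^2/2}$, uniformly in $(\theta,f)$. On $\Omega_n$ one has $|f_2(x)-f(x)|^2=N_n(x)^2/B_n^2\le 4\delta^{-2}N_n(x)^2$, so this part contributes at most $4\delta^{-2}\esp_{\theta,f}(N_n(x)^2)\le C\,n^{-2\beta/(2\beta+1)}$. On $\Omega_n^c$ I would use the crude bound $|f_2(x)|\le\max_{1\le i\le n}|K_{i,h}(x)|$ (as $f_2(x)$ is a convex combination of the $K_{i,h}(x)$, and $|f_2(x)-f(x)|\le\|f\|_\infty$ anyway on the exceptional event where all $\tau(X_i)=0$), whence $|f_2(x)-f(x)|^2\le 2\sum_i K_{i,h}(x)^2+2\|f\|_\infty^2$, and then Cauchy--Schwarz: $\esp_{\theta,f}\big(\mathbf 1_{\Omega_n^c}\sum_i K_{i,h}(x)^2\big)\le\big(\esp_{\theta,f}(\sum_i K_{i,h}(x)^2)^2\big)^{1/2}\,\pr_{\theta,f}(\Omega_n^c)^{1/2}$. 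Expanding the square and using $\esp_{\theta,f}(K_{1,h}(x)^4)\le h^{-3}\|g\|_\infty\int K^4$ (this is where the hypothesis $K\in\mathbb L_4$, available under the assumptions of Theorem~\ref{thm:cv_f}, enters) together with $\esp_{\theta,f}(K_{1,h}(x)^2)\le C h^{-1}$ gives $\esp_{\theta,f}(\sum_i K_{i,h}(x)^2)^2\le C\,n^2h^{-3}$, so the $\Omega_n^c$ contribution is at most $C\,n h^{-3/2}e^{-n\delta^2/4}$, which tends to $0$ faster than any power of $n$. Adding the three contributions (bias via Lemma~\ref{lem1}, variance on $\Omega_n$, exponentially small remainder on $\Omega_n^c$) and taking the supremum over $x,\theta,f$ yields the claim with $C_4=C_4(\beta,L,\alpha,\delta,K)$.

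The main obstacle is precisely the random denominator $B_n$: one must rule out that $f_2$ misbehaves too badly on the rare event that $B_n$ is far from its mean, and this is exactly where the uniform lower bound $b=1-\theta\ge\delta$ (coming from $\theta\le 1-\delta$) and the extra integrability $K\in\mathbb L_4$ are needed; the remaining steps reduce to a standard kernel bias--variance computation and the approximation estimate of Lemma~\ref{lem1}.
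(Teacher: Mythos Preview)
Your proposal is correct and takes a genuinely different route from the paper. The paper handles the random denominator by direct moment computation: setting $S_n=\sum_i f(X_i)/g(X_i)$ (so $B_n=(1-\theta)S_n/n$), it proves a dedicated lemma giving the expansions $\esp_{\theta,f}[S_n^{-1}]\le n^{-1}+c_1n^{-2}$, $\esp_{\theta,f}[(S_n+\delta^{-1})^{-1}]\ge n^{-1}-c_3n^{-2}$, and most delicately $\esp_{\theta,f}[S_n^{-2}]-\esp_{\theta,f}^2[(S_n+2\delta^{-1})^{-1}]\le c_4n^{-3}$, obtained via Taylor expansions of $u\mapsto 1/u$. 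The bias of $f_2(x)$ is then sandwiched between $\int K(t)f(x+th)\,dt$ times $n\,\esp[S_{n-1}^{-1}]$ and $n\,\esp[(S_{n-1}+\delta^{-1})^{-1}]$, while the variance is expanded as $n\Var(Y_1)+n(n-1)\Cov(Y_1,Y_2)$ with $Y_i=\tau_iK_{i,h}(x)S_n^{-1}$; the $O(n^{-3})$ estimate is exactly what makes the covariance term harmless. By contrast, you linearize via $f_2(x)-f(x)=N_n(x)/B_n$, bound $\esp[N_n(x)^2]$ by the standard bias--variance calculation, and deal with $B_n$ through Hoeffding plus a crude bound on the exponentially rare event $\{B_n<\delta/2\}$. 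Your argument is shorter and conceptually cleaner, sidestepping the four-part moment lemma entirely; the price is that you invoke $K\in\mathbb{L}_4(\mathbb{R})$ (for the Cauchy--Schwarz step on $\Omega_n^c$), which is an assumption of Theorem~\ref{thm:cv_f} but not of Proposition~\ref{prop:cv_f2} as stated. Since the proposition is only used inside the proof of Theorem~\ref{thm:cv_f}, this is harmless in context, but be aware that the paper's version establishes the proposition under~\hyperref[hyp:A3]{\textbf{(A3)}} alone.
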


\begin{proof}[Proof of Proposition~\ref{prop:cv_f2}]
Let us denote by 
\begin{align*}
S_n = \sum_{i=1}^n \frac {f(X_i)} {g(X_i)} . 
\end{align*}
The pointwise quadratic risk of $f_2$ can be written as the
sum of a bias term and a variance term
\begin{equation*}
\esp_{\theta,f}(|f_2(x)-f(x)|^{2}) = [\esp_{\theta,f}(f_2(x))-f(x)]^2 + \Var_{\theta,f}[f_2(x)].
\end{equation*} 
Let us first study the bias term. According to~\eqref{eq:f2} and the definition~\eqref{eq:posterior-proba} of the weights, we have
\begin{eqnarray}
\label{eq-esp-f2}
\esp_{\theta,f}[f_2(x)] &=& \frac{n}{h}
\esp_{\theta,f}\left[\tau_1K\Big(\frac{x-X_1}{h}\Big)\Big(\sum_{k=1}^n \tau_k\Big)^{-1}\right] \notag
\\
&=& \frac{n}{h} \esp_{\theta,f}\left[\frac{f(X_1)}{g(X_1)}K\Big(\frac{x-X_1}{h}\Big)S_n^{-1}\right] \notag \\
&=& \frac{n}{h} \int_0^1f(t)K\Big(\frac {x - t} {h}\Big) \esp_{\theta,f}\left[\Big(\frac
  {f(t)} {g(t)} + S_{n-1}\Big)^{-1}\right]dt \notag
  \\
&=& n \int_{-x/h}^{(1-x)/h}K(t)f(x + th) \esp_{\theta,f}\left[\Big(\frac {f(x + th)} {g(x +
    th)} + S_{n-1}\Big)^{-1}\right]dt .
\end{eqnarray}
Since the functions $f$ and $g$ are related by the equation $g(t) =
\theta+(1-\theta)f(t)$ for all $t \in [0,1]$, the ratio $f(t)/g(t)$ is well defined and satisfies 
\begin{align*}
0 \leq \frac {f(t)} {g(t)} \leq \frac{1}{1 - \theta} \le \delta^{-1} , \quad \forall t
\in [0,1], \mbox{ and } \forall \theta \in [\delta, 1-\delta].
\end{align*}
Then for all $t \in [-x/h,(1-x)/h]$, we get 
\begin{eqnarray*}
\frac {1} {S_{n-1} + \delta^{-1}} \leq
\left(\frac {f(x + th)} {g(x + th)} + S_{n-1}\right)^{-1} \leq \frac {1} {S_{n- 1}},
\end{eqnarray*}
where the bounds are uniform with respect to $t$.\\
By combining this inequality with \eqref{eq-esp-f2}, we obtain 
\begin{align*}
&n\left(\int_{-x/h}^{(1-x)/h}K(t)f(x+th)dt\right)\esp_{\theta,f}\Big(\frac
{1} {S_{n- 1} + \delta^{-1}}\Big) \leq \esp_{\theta,f}\big[f_2(x)\big] \\
\mbox{and } \quad 
&\esp_{\theta,f}\big[f_2(x)\big] \leq n\left(\int_{-x/h}^{(1-x)/h}K(t)f(x+th)dt\right)\esp_{\theta,f}\Big(\frac {1} {S_{n- 1}}\Big).
\end{align*}
Then,  we apply  the  following  lemma, whose  proof  is postponed  to
Appendix~\ref{sec:appen_Sn}.
\begin{lem}
\label{lem-expectation-Sn}
There exist  some positive constants  $c_1, c_2, c_3,  c_4$ (depending
on $\delta$) such that for $n$
large enough,
\begin{align}
&\esp_{\theta,f}\Big(\frac {1} {S_n}\Big) \leq \frac{1}{n} + \frac{c_1}{n^2}\label{eq:toto9},\\
&  \esp_{\theta,f}\Big(\frac
{1} {S_n^2}\Big) \leq \frac{c_2}{n^2},\label{eq:toto10}\\
& \esp_{\theta,f}\Big(\frac {1} {S_n+2\delta^{-1}}\Big) \geq  \frac{1}{n} - \frac{c_3}{n^2}\label{eq:toto11},\\
\mbox{and} \quad & \esp_{\theta,f}\Big(\frac {1} {S_n^2}\Big) - \esp_{\theta,f}^2\Big(\frac {1} {\delta^{-1} + S_n}\Big) \leq \frac{c_4}{n^3}. \label{eq:toto12}
\end{align}
\end{lem}
Relying on Inequalities~\eqref{eq:toto9} and~\eqref{eq:toto11}, we have for $n$ large enough
\begin{eqnarray*}
\int_{-x/h}^{(1-x)/h}K(t)f(x+th)dt -\frac {c_3} {n} \leq
\esp_{\theta,f}\big[f_2(x)\big] \leq \int_{-x/h}^{(1-x)/h}K(t)f(x+th)dt + \frac {c_1} {n}.
\end{eqnarray*}
Since $f(x+th) = 0$ for all $t \notin [-x/h,(1-x)/h]$, we may write 
\begin{equation*}
\int_{-x/h}^{(1-x)/h}K(t)f(x+th)dt =
\int_{\mathbb{R}}K(t)f(x+th)dt .
\end{equation*}
Thus,
the bias of $f_2(x)$ satisfies
\begin{equation*}
|b(x)| = |\esp_{\theta,f}\big[f_2(x)\big] -f(x)|\leq \int_{\mathbb{R}}K(t)|f(x + th) - f(x)|dt + \frac {c_5} {n}.
\end{equation*}
By using Lemma~\ref{lem1} and the choice of bandwidth $h$, we obtain that 
\begin{equation*}
b^2(x) \leq  C_5 h^{2\beta},
\end{equation*}
where $C_5 = C_5(\beta, L, K)$. Let us study now the variance term of $f_2(x)$. We have 
\begin{equation}
\label{eq:toto13}
\Var_{\theta,f}\big[f_2(x)\big] = \frac{1}{h^2} \big[n \Var_{\theta,f}(Y_1) + n(n-1) \Cov_{\theta,f}(Y_1,Y_2)\big],
\end{equation}
where
\begin{equation*}
Y_i = \frac{f(X_i)} {g(X_i)} K\Big(\frac{x-X_i}{h}\Big)S_n^{-1}.
\end{equation*}
The variance of $Y_1$ is bounded by its second moment and 
\begin{multline*}
\esp_{\theta,f}(Y_1^2) = \esp_{\theta,f}\left[\Big(\frac{f(X_1)} {g(X_1)}\Big)^2
K^2\Big(\frac{x-X_1}{h}\Big)S_n^{-2}\right]  \\
= \int_0^1\frac{f^2(t)} {g(t)} K^2\Big(\frac{x-t}{h}\Big)
\esp_{\theta,f}\Big[\Big(\frac {f(t)} {g(t)} + S_{n-1}\Big)^{-2}\Big] dt .
\end{multline*}
Now,   recalling   that  $0\le   f/g   \le   \delta^{-1}$  and   using
Inequality~\eqref{eq:toto10} of Lemma~\ref{lem-expectation-Sn}, we get 
\begin{eqnarray}
\label{eq:toto14}
\esp_{\theta,f}(Y_1^2)
&\leq& h\Big(\int_{-x/h}^{(1-x)/h}\frac{f^2(x + th)} {g(x + th)} K^2(t)dt\Big) \esp_{\theta,f}\Big(\frac {1}{S_{n - 1}^2}\Big) \notag \\
&\leq& h \delta^{-1} \sup_{f \in \Sigma(\beta,L)}\|f\|_{\infty}\Big(\int K^2(t)dt\Big) \frac {c_2}{n^2} \leq \frac {C_6h} {n^2} .
\end{eqnarray}
We now study the covariance of $Y_1$ and $Y_2$
\begin{eqnarray*}
& &\Cov_{\theta,f}(Y_1,Y_2) =  \esp_{\theta,f}(Y_1Y_2) - \esp_{\theta,f}^2(Y_1)\\
 &=& \esp_{\theta,f}\left[\frac{f(X_1)f(X_2)} {g(X_1)g(X_2)}K\Big(\frac{x-X_1}{h}\Big)K\Big(\frac{x-X_2}{h}\Big)S_n^{-2}\right] - \esp_{\theta,f}^2\left[\frac{f(X_1)} {g(X_1)}
K\Big(\frac{x-X_1}{h}\Big)S_n^{-1}\right]\\
&=& \int_{[0,1]^2}f(t)f(u)K\Big(\frac{x-t}{h}\Big)K\Big(\frac{x-u}{h}\Big)
\esp_{\theta,f}\left[\Big(\frac {f(t)} {g(t)} + \frac {f(u)} {g(u)} + S_{n-2}\Big)^{-2}\right] dtdu\\
& & -  \left(\int_0^1f(t)K\Big(\frac{x-t}{h}\Big) \esp_{\theta,f}\left[\Big(\frac {f(t)} {g(t)}
+ S_{n-1}\Big)^{-1}\right] dt\right)^2 \\
&=& \int_{[0,1]^2}f(t)f(u)K\Big(\frac{x-t}{h}\Big)K\Big(\frac{x-u}{h}\Big)A(t,u)dtdu,
\end{eqnarray*}
where
\begin{eqnarray*}
A(t,u) &=& \esp_{\theta,f}\left[\Big(\frac {f(t)} {g(t)} + \frac {f(u)} {g(u)} +
S_{n-2}\Big)^{-2}\right] - \esp_{\theta,f}\left[\Big(\frac {f(t)} {g(t)} +
S_{n-1}\Big)^{-1}\right]\esp_{\theta,f}\left[\Big(\frac {f(u)} {g(u)} + S_{n-1}\Big)^{-1}\right]\\
&\leq& \esp_{\theta,f}\Big(\frac {1} {S_{n - 2}^2}\Big) - \esp_{\theta,f}^2\Big(\frac {1} {2\delta^{-1} + S_{n - 2}}\Big).
\end{eqnarray*}
Hence
\begin{eqnarray*}
\Cov(Y_1,Y_2) &\leq& \int_{[0,1]^2}f(t)f(u)K\Big(\frac{x-t}{h}\Big)K\Big(\frac{x-u}{h}\Big)\left[\esp_{\theta,f}\Big(\frac {1} {S_{n - 2}^2}\Big) - \esp_{\theta,f}^2\Big(\frac {1} {2\delta^{-1}  + S_{n - 2}}\Big)\right]dtdu\\
&\leq& h^2\left(\int_{\mathbb{R}}f(x + th)K(t)dt\right)^2\left[\esp_{\theta,f}\Big(\frac {1} {S_{n - 2}^2}\Big) -
\esp_{\theta,f}^2\Big(\frac {1} {2\delta^{-1}  + S_{n - 2}}\Big)\right]\\
&\leq&  C_7h^2\left[\esp_{\theta,f}\Big(\frac {1} {S_{n - 2}^2}\Big) - \esp_{\theta,f}^2\Big(\frac {1} {2\delta^{-1}  + S_{n - 2}}\Big)\right].\\
\end{eqnarray*}
According          to          Inequality~\eqref{eq:toto12}         of
Lemma~\ref{lem-expectation-Sn}, we have
\begin{equation*}
\esp_{\theta,f}\Big(\frac {1} {S_{n-2}^2}\Big) - \esp_{\theta,f}^2\Big(\frac {1} {2\delta^{-1} + S_{n-2}}\Big) \leq
\frac{c_4}{n^3},
\end{equation*}
hence
\begin{equation}
\label{eq:toto15}
\Cov_{\theta,f}(Y_1,Y_2) \leq \frac{C_8h^2}{n^3}.
\end{equation}
By returning to Equality~\eqref{eq:toto13} and combining with
\eqref{eq:toto14} and \eqref{eq:toto15}, we obtain
\begin{eqnarray*}
 \Var_{\theta,f}\big[f_2(x)\big] \leq \frac{1}{h^2}\left[\frac{C_6h}{n} + n(n -
 1)h^2\frac{C_8h^2}{n^3}\right] \leq \frac{C_9}{nh}.
\end{eqnarray*}
Thus, as the bandwidth $h$ is of order $n^{-1/(2\beta + 1)}$,
the pointwise quadratic risk of $f_2(x)$ satisfies
\begin{align*}
\esp_{\theta,f}(|f_2(x)-f(x)|^{2}) \leq C_4n^{\frac{-2\beta}{2\beta + 1}}.
\end{align*}
\end{proof}

\begin{proof}[Proof of Theorem~\ref{thm:cv_f}.]
First, the pointwise quadratic risk of $\hat{f}_n(x)$ is bounded in the
following way
\begin{equation}
\label{eq:toto20}
\esp_{\theta,f}(|\hat{f}_n(x)-f(x)|^{2}) \leq
2\esp_{\theta,f}(|f_2(x)-f(x)|^{2}) +
2\esp_{\theta,f}(|\hat{f}_n(x)-f_2(x)|^{2}).
\end{equation}
According to Proposition~\ref{prop:cv_f2}, we have 
\begin{equation}
\label{eq:toto21}
\esp_{\theta,f}(|f_2(x)-f(x)|^{2}) \leq C_4n^{\frac{-2\beta}{2\beta+1}},
\end{equation}
and it remains to study the second term appearing in the right-hand side of~\eqref{eq:toto20}. We write 
\begin{eqnarray*}
\hat{f}_n(x) - f_2(x) &=&
\frac{1}{h}\sum_{i=1}^n\left(\frac{\hat{\tau}_i}{\sum_k\hat{\tau}_k} -
\frac{\tau_i}{\sum_k\tau_k}\right)K\left(\frac{x-X_i}{h}\right)\\
&=& \frac{1}{h}\sum_{i=1}^n\frac{\hat{\tau}_i -
  \tau_i}{\sum_k\hat{\tau}_k}K\left(\frac{x-X_i}{h}\right) + \frac{1}{h}\sum_{i=1}^n\tau_i\left(\frac{1}{\sum_k\hat{\tau}_k} -
\frac{1}{\sum_k\tau_k}\right)K\left(\frac{x-X_i}{h}\right)\\
&=& \frac{n}{\sum_k\hat{\tau}_k} \times\frac{1}{n h}\sum_{i=1}^n(\hat{\tau}_i -
  \tau_i) K\left(\frac{x-X_i}{h}\right)\\
&&+ \frac{n^2}{\sum_k\hat{\tau}_k \sum_k\tau_k}\times
\frac{\sum_k(\tau_k-\hat{\tau}_k)}{n}\times\frac{1}{n h}\sum_{i=1}^n\tau_i K\left(\frac{x-X_i}{h}\right).
\end{eqnarray*}
Moreover, recalling the definition of the weights~\eqref{eq:def_weight}, we have for all $1\le i\le n$, 
\begin{equation*}
\hat{\tau}_i - \tau_i = \frac{\hat{\theta}_n}{\tilde{g}_n(X_i)}-\frac{\theta}{g(X_i)}
=\hat{\theta}_n\Big[\frac{1}{\tilde{g}_n(X_i)}-\frac{1}{g(X_i)}\Big]+\frac{1}{g(X_i)}(\hat{\theta}_n-\theta),
\end{equation*}
and thus  get 
\begin{eqnarray}
\label{eq:equality-difference}
\hat{f}_n(x) - f_2(x) &=& \frac{n\hat{\theta}_n}{\sum_k\hat{\tau}_k}
\times\frac{1}{n
  h}\sum_{i=1}^n\Big[\frac{1}{\tilde{g}_n(X_i)}-\frac{1}{g(X_i)}\Big]
K\left(\frac{x-X_i}{h}\right)\notag\\
&&+ \frac{n(\hat{\theta}_n-\theta)}{\sum_k\hat{\tau}_k} \times\frac{1}{n h}\sum_{i=1}^n\frac{1}{g(X_i)} K\left(\frac{x-X_i}{h}\right)\notag\\
&&+ \frac{n^2\hat{\theta}_n}{\sum_k\hat{\tau}_k \sum_k\tau_k}\times\frac{1}{n}
\sum_k\Big[\frac{1}{\tilde{g}_n(X_k)}-\frac{1}{g(X_k)}\Big]\times\frac{1}{n
  h}\sum_{i=1}^n\tau_i K\left(\frac{x-X_i}{h}\right)\notag\\
&&+ \frac{n^2(\hat{\theta}_n-\theta)}{\sum_k\hat{\tau}_k \sum_k\tau_k}\times\frac{1}{n}
\sum_k\frac{1}{g(X_k)}\times\frac{1}{n
  h}\sum_{i=1}^n\tau_i K\left(\frac{x-X_i}{h}\right).
\end{eqnarray}
Let us control the different terms appearing in this latter equality. 
We first remark that for all $i$, 
\begin{equation}
\label{eq:bound-tau-g}
0 \leq \tau_i\leq 1 \ \text{and}\ \frac{1}{g(X_i)} \leq \frac{1}{\theta} \le \delta^{-1}.
\end{equation}
Since by assumption $\hat{\theta}_n \xrightarrow[n\to\infty] {as} \theta \in [0,1]$,
for $n$ large enough we also get $|\hat{\theta}_n| < 3/2 , \mbox{a.s.}$ 
According to the law of large numbers and $\esp_{\theta,f}(\tau_1) =1 - \theta$, we also obtain that for $n$ large enough 
\begin{equation}
\label{eq:bound-sum-tau}
\frac \delta 2 \leq \frac{1-\theta}{2} \leq \frac{1}{n}\sum_{i=1}^n\tau_i \leq \frac{3(1-\theta)}{2} \le \frac{3(1-\delta)}{2}
\quad \mbox{a.s.}
\end{equation}
Moreover, by using a Taylor expansion of the function $u\mapsto 1/u$ with an integral form of the remainder term, we have for all $i$,
\[
\Big|\frac{1}{\tilde{g}_n(X_i)}-\frac{1}{g(X_i)}\Big| = \frac{|\tilde{g}_n(X_i)-g(X_i)|}{g^2(X_i)}\int_0^1\left(1
    + s\frac{\tilde{g}_n(X_i)-g(X_i)}{g(X_i)}\right)^{-2}ds . 
\]
Since convergence of $\hat{g}_n$ to $ g$ is valid pointwise and in $\mathbb{L}_\infty$ norm (see Remark~\ref{rem:first}), and since $\tilde g_n$ is a slight modification of $\hat g_n$,  we have almost surely, for $n$
  large enough and for all $ s \in [0,1]$ and all $x\in [0,1]$, 
\begin{align*}
1 + s\frac{\tilde{g}_n(x)-g(x)}{g(x)} \geq 1 -
s\frac{\|\hat{g}_n-g\|_{\infty}}{\theta} \geq 1 - \frac{s}{2} > 0.
\end{align*}
Hence, for all $x\in [0,1]$ and  large enough $n$, 
\begin{equation*}
\int_0^1\left(1 + s\frac{\tilde{g}_n(x)-g(x)}{g(x)}\right)^{-2}ds \leq
\int_0^1\frac{4ds}{(2-s)^2} = 2,
\end{equation*}
and we obtain 
\begin{equation}
\label{eq:bound-gcp-g}
\Big|\frac{1}{\tilde{g}_n(X_i)}-\frac{1}{g(X_i)}\Big| 
\leq 2 \delta^{-2}|\tilde{g}_n(X_i)-g(X_i)|\quad \mbox{a.s.}
\end{equation}
We also  use the following lemma, whose proof is postponed to Appendix~\ref{sec:appen_sum_hat_tau}. 
\begin{lem}\label{lem:sum_hat_tau}
For large enough $n$, we have  
\begin{equation}
\label{eq:bound-sum-taucp}
\frac{n}{|\sum_k\hat{\tau}_k|} \leq c_7 \quad \mbox{a.s.}
\end{equation}
\end{lem}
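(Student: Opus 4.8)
The plan is to show that the empirical mean $n^{-1}\sum_{k=1}^n\hat{\tau}_k$ stays bounded away from zero almost surely for $n$ large; the bound~\eqref{eq:bound-sum-taucp} then follows immediately by taking reciprocals, with $c_7$ depending only on $\delta$.

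First I would compare the data-driven weights $\hat{\tau}_k$ with the oracle weights $\tau_k = 1 - \theta/g(X_k)$ from~\eqref{eq:posterior-proba}, writing $n^{-1}\sum_k\hat{\tau}_k = n^{-1}\sum_k\tau_k + n^{-1}\sum_k(\hat{\tau}_k-\tau_k)$. The first sum is handled by~\eqref{eq:bound-sum-tau}: since the $\tau_k$ are i.i.d., take values in $[0,1]$ and satisfy $\esp_{\theta,f}(\tau_1) = 1-\theta$, the strong law of large numbers gives $n^{-1}\sum_k\tau_k \ge (1-\theta)/2 \ge \delta/2$ almost surely for $n$ large enough.

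Next I would bound $n^{-1}\sum_k|\hat{\tau}_k - \tau_k|$ using the decomposition $\hat{\tau}_i - \tau_i = \hat{\theta}_n\Big[\frac{1}{\tilde{g}_n(X_i)}-\frac{1}{g(X_i)}\Big] + \frac{1}{g(X_i)}(\hat{\theta}_n-\theta)$ already established in the proof of Theorem~\ref{thm:cv_f}. Each of the two contributions is controlled uniformly in $i$: one has $|\hat{\theta}_n|\le 3/2$ almost surely for $n$ large since $\hat{\theta}_n\to\theta$; one has $|1/\tilde{g}_n(X_i) - 1/g(X_i)| \le 2\delta^{-2}\|\tilde{g}_n-g\|_\infty$ almost surely by~\eqref{eq:bound-gcp-g}; and $1/g(X_i)\le\delta^{-1}$ by~\eqref{eq:bound-tau-g}. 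This yields $n^{-1}\sum_k|\hat{\tau}_k-\tau_k| \le 3\delta^{-2}\|\tilde{g}_n-g\|_\infty + \delta^{-1}|\hat{\theta}_n-\theta|$, which tends to $0$ almost surely: $\|\hat{g}_n-g\|_\infty\to 0$ a.s.\ by Remark~\ref{rem:first}(i), $\tilde{g}_n$ differs from $\hat{g}_n$ only by a term of order $(nh)^{-1}$ (recall $nh = \alpha n^{2\beta/(2\beta+1)}\to\infty$), so $\|\tilde{g}_n-g\|_\infty\to 0$ a.s.\ too, and $\hat{\theta}_n\to\theta$ a.s.\ by assumption.

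Combining the two estimates, for $n$ large enough we obtain $n^{-1}\sum_k\hat{\tau}_k \ge (1-\theta)/4 \ge \delta/4 > 0$ almost surely; in particular $\sum_k\hat{\tau}_k$ is positive, so $|\sum_k\hat{\tau}_k| = \sum_k\hat{\tau}_k \ge n\delta/4$ and $n/|\sum_k\hat{\tau}_k| \le 4/\delta =: c_7$. The only (mild) technical point will be justifying that $\|\tilde{g}_n-g\|_\infty\to 0$ a.s.\ from the analogous fact for $\hat{g}_n$; since one checks $\tilde{g}_n(X_i) = \tfrac{n}{n-1}\hat{g}_n(X_i) - \tfrac{K(0)}{(n-1)h}$, this reduces to $nh\to\infty$ together with the a.s.\ boundedness of $\|\hat{g}_n\|_\infty$, and it has in any case already been used implicitly to derive~\eqref{eq:bound-gcp-g}, so it can simply be cited.
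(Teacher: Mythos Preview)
Your proof is correct and follows essentially the same idea as the paper: both arguments control $\max_k|\hat\tau_k-\tau_k|$ via the decomposition $\hat\tau_i-\tau_i=\hat\theta_n\big[\tilde g_n(X_i)^{-1}-g(X_i)^{-1}\big]+g(X_i)^{-1}(\hat\theta_n-\theta)$, show it tends to $0$ a.s.\ using Remark~\ref{rem:first} and consistency of $\hat\theta_n$, and then compare $\sum_k\hat\tau_k$ with the oracle sum $\sum_k\tau_k$ handled by~\eqref{eq:bound-sum-tau}. The only cosmetic difference is that the paper expands $1/\sum_k\hat\tau_k$ around $1/\sum_k\tau_k$ via a Taylor formula with integral remainder, whereas you simply bound $n^{-1}\sum_k\hat\tau_k$ from below and invert; your route is slightly more direct and avoids the extra integral-remainder bookkeeping, while the paper's expansion would be more informative if one later wanted a quantitative estimate on $n/\sum_k\hat\tau_k - n/\sum_k\tau_k$ rather than just a uniform bound.
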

By returning to Equality~\eqref{eq:equality-difference} and combining with \eqref{eq:bound-tau-g}, \eqref{eq:bound-sum-tau}, \eqref{eq:bound-gcp-g} and \eqref{eq:bound-sum-taucp}, we obtain  
\begin{eqnarray}
\label{eq:inequality-difference}
|\hat{f}_n(x) - f_2(x)|^2 &\leq& c_8\left(\frac{1}{nh}\sum_{i=1}^n |\tilde{g}_n(X_i)-g(X_i)| \times \Big| K\Big(\frac{x-X_i}{h}\Big) \Big|\right)^2\notag\\
&&+c_9|\hat{\theta}_n -
\theta|^2\left(\frac{1}{nh}\sum_{i=1}^n\Big|K\Big(\frac{x-X_i}{h}\Big)\Big|\right)^2\\
&&+c_{10}\left(\frac{1}{n}\sum_{i=1}^n
  |\tilde{g}_n(X_i)-g(X_i)|\right)^2 \left(\frac{1}{nh}\sum_{i=1}^n K\Big(\frac{x-X_i}{h}\Big)\right)^2\quad \mbox{a.s.} \notag
\end{eqnarray}
We now successively control the expectations  $T_1,T_2$ and $T_3$ of the three terms appearing in this upper-bound. For the first term, we have
\begin{eqnarray*}
T_1&=&\esp_{\theta,f}\left[\left(\frac{1}{nh}\sum_{i=1}^n
    |\tilde{g}_n(X_i)-g(X_i)| \times \Big| K\Big(\frac{x-X_i}{h}\Big) \Big| \right)^2\right]\\
&=& \esp_{\theta,f}\left[\frac{1}{n^2h^2}\sum_{i,j=1}^n |\tilde{g}_n(X_i)-g(X_i)| |\tilde{g}_n(X_j)-g(X_j)| \times \Big| K\Big(\frac{x-X_i}{h}\Big)  K\Big(\frac{x-X_j}{h}\Big)\Big|\right]\\
&=& \frac{1}{nh}\esp_{\theta,f}\left[\frac{1}{h} |\tilde{g}_n(X_1)-g(X_1)|^2
  K^2\Big(\frac{x-X_1}{h}\Big)\right] \\
&&+\frac{n-1}{n}\esp_{\theta,f}\left[\frac{1}{h^2}|\tilde{g}_n(X_1)-g(X_1)| |\tilde{g}_n(X_2)-g(X_2)|\times \Big| K\Big(\frac{x-X_1}{h}\Big)  K\Big(\frac{x-X_2}{h}\Big) \Big| \right]. 
\end{eqnarray*}
Now, 
\begin{eqnarray}
\label{eq:expectation-type1}
T_{11}&=&\esp_{\theta,f}\left[ \frac{1}{h}|\tilde{g}_n(X_1)-g(X_1)|^2
  K^2\Big(\frac{x-X_1}{h}\Big)\right] \notag\\
&=&\int_0^1
\esp_{\theta,f}\left( |\hat{g}_{n-1}(t)-g(t)|^2\right)
K^2\Big(\frac{x-t}{h}\Big)\frac{g(t)}{h}dt \ \text{ (according to definition~\eqref{eq:def_weight})} \notag\\
&\leq& C_{10}n^{\frac{-2\beta}{2\beta+1}} \int_0^1 K^2\Big(\frac{x-t}{h}\Big)\frac{g(t)}{h}dt \ \text{ (according to Remark~\ref{rem:first})}
\notag\\
&\leq& C_{11}n^{\frac{-2\beta}{2\beta+1}}\ \text{(according to Lemma \ref{lem:Bochner})},
\end{eqnarray}
and in the same way 
\begin{align*}
T_{12}=&\esp_{\theta,f}\left[\frac{1}{h^2}|\tilde{g}_n(X_1)-g(X_1)|
  |\tilde{g}_n(X_2)-g(X_2)| K\Big(\frac{x-X_1}{h}\Big)
  K\Big(\frac{x-X_2}{h}\Big)\right] \\
= & \int_0^1\int_0^1\esp_{\theta,f}\Big[\Big|\frac{n-2}{n-1}\hat{g}_{n-2}(t)-g(t)+\frac{1}{(n-1)h}K\Big(\frac{t-s}{h}\Big)\Big| \\
& \times \Big|\frac{n-2}{n-1}\hat{g}_{n-2}(s)-g(s)
  +\frac{1}{(n-1)h}K\Big(\frac{s-t}{h}\Big)\Big|\Big]\Big| K\Big(\frac{x-t}{h}\Big)   K\Big(\frac{x-s}{h}\Big) \Big|\frac{g(t)g(s)}{h^2}dtds .
\end{align*}
This last term is upper-bound by  
\begin{align*}
T_{12}\leq & \int_0^1\int_0^1\esp_{\theta,f}\Big[\Big(|\hat{g}_{n-2}(t)-g(t)|+\frac{1}{n-1}g(t)+\frac{1}{(n-1)h}\Big|K\Big(\frac{t-s}{h}\Big)\Big|\Big)\\
& \times \Big(|\hat{g}_{n-2}(s)-g(s)|
 +\frac{1}{n-1}g(s)+\frac{1}{(n-1)h}\Big|K\Big(\frac{s-t}{h}\Big)\Big|\Big)\Big]\\
& \times  \Big| K\Big(\frac{x-t}{h}\Big) K\Big(\frac{x-s}{h}\Big) \Big| \frac{g(t)g(s)}{h^2}dtds\\
\leq & \int_0^1\int_0^1\left\{\esp_{\theta,f}^{1/2}\big[|\hat{g}_{n-2}(t)-g(t)|^2\big]\esp_{\theta,f}^{1/2}\big[|\hat{g}_{n-2}(s)-g(s)|^2\big]+o\Big(\frac{1}{nh}\Big) \right\}\\
& \times \Big| K\Big(\frac{x-t}{h}\Big)K\Big(\frac{x-s}{h}\Big) \Big| \frac{g(t)g(s)}{h^2}dtds.
\end{align*}
According to Remark~\ref{rem:first}, we have
\begin{equation}
\label{eq:expectation-type2}
T_{12}\leq C_{12}n^{\frac{-2\beta}{2\beta+1}} \left[\int_0^1 \Big| K\Big(\frac{x-t}{h}\Big) \Big|\frac{g(t)}{h}dt\right]^2 \leq C_{13}n^{\frac{-2\beta}{2\beta+1}}\ \text{(according to Lemma \ref{lem:Bochner})}.
\end{equation}
Thus we get that
\begin{equation}
\label{eq:term1}
T_1= \esp_{\theta,f}\left[\left(\frac{1}{nh}\sum_{i=1}^n
    |\tilde{g}_n(X_i)-g(X_i)| \Big| K\Big(\frac{x-X_i}{h}\Big)\Big| \right)^2\right] \leq C_{14}n^{\frac{-2\beta}{2\beta+1}}.
\end{equation}
For the second term in the right hand side of~\eqref{eq:inequality-difference}, we have
\begin{eqnarray*}
T_2 &=& \esp_{\theta,f}\left[|\hat{\theta}_n -
\theta|^2\left(\frac{1}{nh}\sum_{i=1}^n\Big|K\Big(\frac{x-X_i}{h}\Big)\Big|\right)^2\right]\\
 &\leq& \esp_{\theta,f}^{1/2}\left[|\hat{\theta}_n -
\theta|^4\right]\esp_{\theta,f}^{1/2}\left[\left(\frac{1}{nh}\sum_{i=1}^n\Big|K\Big(\frac{x-X_i}{h}\Big)\Big|\right)^4\right] .
\end{eqnarray*}
The proof of the following lemma is postponed to Appendix~\ref{sec:appen_control_K1}. 
\begin{lem}\label{lem:control_K1}
There exist some positive constant $C_{15}$ such that   
\begin{equation}
\label{eq:toto}
\esp_{\theta,f}\left[\left(\frac{1}{nh}\sum_{i=1}^n\Big|K\Big(\frac{x-X_i}{h}\Big)\Big|\right)^4\right]
\leq C_{15}.
\end{equation}
\end{lem}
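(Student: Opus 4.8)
The plan is to set $W_i := h^{-1}\bigl|K\bigl((x-X_i)/h\bigr)\bigr|$ for $1\le i\le n$, so that the left-hand side of~\eqref{eq:toto} is $\esp_{\theta,f}\bigl[(n^{-1}\sum_{i=1}^n W_i)^4\bigr]$, where the $W_i$ are i.i.d.\ nonnegative random variables. The crude bound via Jensen's inequality, $(n^{-1}\sum_i W_i)^4\le n^{-1}\sum_i W_i^4$, is not enough here because $\esp_{\theta,f}(W_1^4)$ is only of order $h^{-3}$; instead one must expand the fourth power and use independence, the point being that the main contribution comes from the $O(n^4)$ terms with four distinct indices, which factorize into a product of four first moments, each of order $1$.

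First I would record moment bounds for $W_1$. Writing $g=\theta+(1-\theta)f$ and noting $\|g\|_\infty\le M:=1+\sup_{\psi\in\Sigma(\beta,L)}\|\psi\|_\infty<+\infty$ (this finiteness being recorded just after the definition of $\Sigma(\beta,L)$), the change of variables $u=(x-t)/h$ gives, for every integer $m\ge 1$,
\[
\esp_{\theta,f}(W_1^m)=\int_{\mathbb{R}}h^{-(m-1)}|K(u)|^m\,g(x-uh)\,du\le M\,h^{-(m-1)}\int_{\mathbb{R}}|K(u)|^m\,du,
\]
and the integral is finite for $m\in\{1,2,3,4\}$: the cases $m=1,2$ follow from the definition of a kernel and from Assumption~\hyperref[hyp:A3]{\textbf{(A3)}}, the case $m=4$ is the extra hypothesis $K\in\mathbb{L}_4(\mathbb{R})$, and $m=3$ follows by Cauchy--Schwarz, $\int|K|^3\le\bigl(\int|K|^2\bigr)^{1/2}\bigl(\int|K|^4\bigr)^{1/2}$. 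Hence $\esp_{\theta,f}(W_1)\le M\int_{\mathbb{R}}|K|$, $\esp_{\theta,f}(W_1^2)=O(h^{-1})$, $\esp_{\theta,f}(W_1^3)=O(h^{-2})$ and $\esp_{\theta,f}(W_1^4)=O(h^{-3})$, with constants depending only on $\beta,L$ and $K$.

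Next I would expand
\[
\esp_{\theta,f}\Bigl[\Bigl(\tfrac1n\textstyle\sum_{i=1}^n W_i\Bigr)^4\Bigr]=\frac1{n^4}\sum_{i,j,k,l=1}^n\esp_{\theta,f}(W_iW_jW_kW_l)
\]
and group the quadruples $(i,j,k,l)$ according to the partition they induce on the four positions: all four indices equal ($O(n)$ terms, each equal to $\esp_{\theta,f}(W_1^4)$), one index repeated three times ($O(n^2)$ terms, each $\esp_{\theta,f}(W_1^3)\esp_{\theta,f}(W_1)$ by independence), two distinct indices each repeated twice ($O(n^2)$ terms, each $\esp_{\theta,f}(W_1^2)^2$), one index repeated twice together with two further distinct indices ($O(n^3)$ terms, each $\esp_{\theta,f}(W_1^2)\esp_{\theta,f}(W_1)^2$), and four distinct indices ($O(n^4)$ terms, each $\esp_{\theta,f}(W_1)^4$). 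Inserting the moment bounds above and $h=\alpha n^{-1/(2\beta+1)}$, so that $nh=\alpha n^{2\beta/(2\beta+1)}\to\infty$, these five groups contribute respectively $O\bigl((nh)^{-3}\bigr)$, $O\bigl((nh)^{-2}\bigr)$, $O\bigl((nh)^{-2}\bigr)$, $O\bigl((nh)^{-1}\bigr)$ and $O(1)$. The sum is therefore bounded from above, uniformly in $x\in[0,1]$, $\theta\in[\delta,1-\delta]$, $f\in\Sigma(\beta,L)$ and $n$ large; for the remaining finitely many values of $n$ every term is finite and still bounded uniformly in $x,\theta,f$ by the inequality above. This produces the constant $C_{15}$.

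The argument is essentially bookkeeping. The only mild subtlety is ensuring that all moments of $|K|$ up to order~$4$ are finite — which is precisely the reason the hypothesis $K\in\mathbb{L}_4(\mathbb{R})$ is imposed in Theorem~\ref{thm:cv_f} — and correctly identifying that only the all-distinct group of terms produces a genuine $O(1)$ contribution while the others are negligible as $n\to\infty$.
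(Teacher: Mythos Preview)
Your proof is correct and follows essentially the same approach as the paper: define the i.i.d.\ summands, bound their moments of order $1$ through $4$ via the substitution $u=(x-t)/h$ and the boundedness of $g$, expand the fourth power according to the partition type of the index quadruple, and observe that only the all-distinct block contributes at order $1$ while the others are $O((nh)^{-k})$ for $k\ge 1$. Your additional care in justifying $\int|K|^3<\infty$ via Cauchy--Schwarz from $K\in\mathbb{L}_2\cap\mathbb{L}_4$ is a point the paper leaves implicit.
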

This lemma entails that
\begin{equation}
\label{eq:term2}
T_2
\leq C_{15}\left[\esp_{\theta,f}\left(|\hat{\theta}_n -    \theta|^4\right)\right]^{\frac{1}{2}}. 
\end{equation}
Now, we turn to the third term in the right hand side of~\eqref{eq:inequality-difference}. We have
\begin{eqnarray*}
T_3 &=&\esp_{\theta,f}\left[\left(\frac{1}{n}\sum_{i=1}^n
  |\tilde{g}_n(X_i)-g(X_i)|\right)^2 \left(\frac{1}{nh}\sum_{i=1}^n
  \Big| K\Big(\frac{x-X_i}{h}\Big) \Big| \right)^2\right]\\
&=& \esp_{\theta,f}\left[\frac{1}{n^4h^2}\sum_{i,j,k,l=1}^n |\tilde{g}_n(X_i)-g(X_i)| |\tilde{g}_n(X_j)-g(X_j)| \Big|  K\Big(\frac{x-X_k}{h}\Big)  K\Big(\frac{x-X_l}{h}\Big) \Big|  \right] .
\end{eqnarray*}
By using the same arguments as for obtaining~\eqref{eq:expectation-type1} and~\eqref{eq:expectation-type2}, we can get that
\begin{equation}
\label{eq:term3}
T_3 \leq C_{16} n^{\frac{-2\beta}{2\beta+1}}.
\end{equation}
According to \eqref{eq:term1}, \eqref{eq:term2} and \eqref{eq:term3}, we may conclude 
\begin{equation}
\label{eq:difference}
\esp_{\theta,f}(|\hat{f}_n(x)-f_2(x)|^{2}) \leq C_{15}\Big[\esp_{\theta,f}\Big(|\hat{\theta}_n -
    \theta|^4\Big)\Big]^{\frac{1}{2}} + C_{17}n^{\frac{-2\beta}{2\beta+1}}. 
\end{equation}
By returning to Inequality~\eqref{eq:toto20} and combining it with \eqref{eq:toto21} and \eqref{eq:difference}, we achieve that
\begin{equation*}
\esp_{\theta,f}(|\hat{f}_n(x)-f(x)|^{2}) \leq C_1\Big[\esp_{\theta,f}\Big(|\hat{\theta}_n -
    \theta|^4\Big)\Big]^{\frac{1}{2}} + C_2n^{\frac{-2\beta}{2\beta+1}}. 
\end{equation*}
\end{proof}

\subsection{Other proofs} \label{sec:other_proofs}

\begin{proof}[Proof of Proposition~\ref{prop:hatft_decreases}]
By using the same arguments as for obtaining~\eqref{eq-decroissante}, we can get that
\begin{equation*}
l_n(\hat{f}^{(t)}) - l_n(\hat{f}^{(t+1)}) \geq \frac{1}{n}\sum_{k=1}^n \hat{\omega}_k^{(t)} D(\hat{f}^{(t+1)}\mid \hat{f}^{(t)}).
\end{equation*}
Let us now denote by 
 \begin{equation*}
m = \underset{x \in [-1,1]}{\inf} K_h(x)\ \text{and}\ M = \underset{x \in [-1,1]}{\sup} K_h(x),
\end{equation*} 
then $m$ and $M$ are two positive constants depending on the bandwidth $h$ and the
kernel $K$. We note that for all $x \in [0,1]$,
\begin{equation*}
m \leq \int_0^1 K_h(u-x)du \leq \min (M,1).
\end{equation*}  
Thus, for all $t\ge 1$, the estimate $\hat{f}^{(t)}$ is lower bounded by $m$. Since the operator $\mathcal{N}$ is increasing,  it follows that $\mathcal{N}\hat{f}^{(t)}$ is also lower bounded by $m$.
Now the function 
\begin{equation*}
x \mapsto \frac{(1-\theta)x}{\theta+(1-\theta)x}
\end{equation*} 
is increasing, so that we finally obtain
\begin{equation*}
\hat{\omega}_k^{(t)} = \frac{(1-\theta)\mathcal{N}\hat{f}^{(t)}(X_k)}{\theta +
  (1-\theta)\mathcal{N}\hat{f}^{(t)}(X_k)}
\geq \frac{(1-\theta) m}{\theta +
  (1-\theta) m} = c.
\end{equation*}
This concludes the proof. 
\end{proof}


\begin{proof}[Proof of Proposition~\ref{prop:subsequence}.]
We start by stating a  lemma,  whose proof  is postponed  to
Appendix~\ref{sec:appen_continuous}.
\begin{lem}
\label{lem-continuous}
The function $l: \mathcal{B} \rightarrow \mathbb{R}$ 
is continuous
with respect to the topology induced by uniform convergence on the set of 
functions defined on $[0,1]$.
\end{lem}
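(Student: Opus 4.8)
The plan is to show that uniform convergence $f_n \to f$ with $f_n, f \in \mathcal{B}$ forces $l(f_n) \to l(f)$. Writing $l(f) = \int_0^1 g_0(x)\log g_0(x)\,dx - \int_0^1 g_0(x)\log[\theta + (1-\theta)\mathcal{N}f(x)]\,dx$ and noting that the first integral does not depend on $f$, it suffices to establish continuity of the map $f \mapsto \int_0^1 g_0(x)\log[\theta + (1-\theta)\mathcal{N}f(x)]\,dx$ on $\mathcal{B}$. The idea is to propagate uniform convergence along the chain of operations $f \mapsto \log f \mapsto \mathcal{S}^*(\log f) \mapsto \mathcal{N}f = \exp(\mathcal{S}^*(\log f)) \mapsto l(f)$, controlling each arrow by a Lipschitz estimate.

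The key preliminary step is to confine all the relevant quantities to a compact interval bounded away from $0$, so that $\log$ and $\exp$ behave in a Lipschitz manner. Exactly as in the proof of Proposition~\ref{prop:hatft_decreases}, let $m = \inf_{x\in[-1,1]} K_h(x)$ and $M = \sup_{x\in[-1,1]} K_h(x)$, both positive. For any density $\varphi$ on $[0,1]$ and any $x \in [0,1]$, the arguments $u-x$ and $s-u$ range over $[-1,1]$, so that $m \le \int_0^1 K_h(s-u)\,ds \le \min(M,1)$ and $m \le K_h(u-x) \le M$; integrating against $\varphi$ yields $a \le \mathcal{S}\varphi(x) \le b$ with $a = m/\min(M,1) > 0$ and $b = M/m$. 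Hence every element of $\mathcal{B}$ lies uniformly in $[a,b]$; in particular $\log f$ is bounded, so $\mathcal{B} \subset \mathcal{F}$ and all the integrals below are well defined.

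It then remains to chase the estimate through the chain. First, since $f_n, f$ take values in $[a,b]$, on which $\log$ is Lipschitz with constant $a^{-1}$, we get $\|\log f_n - \log f\|_\infty \le a^{-1}\|f_n - f\|_\infty \to 0$. Next, $\mathcal{S}^*$ is an averaging operator and hence non-expansive for the sup norm, since $|\mathcal{S}^*\psi(x)| \le \|\psi\|_\infty \int_0^1 K_h(u-x)\,du \big/ \int_0^1 K_h(s-x)\,ds = \|\psi\|_\infty$; this gives $\|\mathcal{S}^*(\log f_n) - \mathcal{S}^*(\log f)\|_\infty \to 0$. Moreover $\mathcal{S}^*$ preserves the pointwise bounds $[\log a, \log b]$, being a weighted average with weights integrating to $1$, so $\mathcal{N}f_n = \exp(\mathcal{S}^*(\log f_n))$ and $\mathcal{N}f$ take values in $[a,b]$; as $\exp$ is Lipschitz with constant $b$ on $[\log a, \log b]$, we obtain $\|\mathcal{N}f_n - \mathcal{N}f\|_\infty \le b\,\|\mathcal{S}^*(\log f_n) - \mathcal{S}^*(\log f)\|_\infty \to 0$.

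Finally, $\theta + (1-\theta)\mathcal{N}f(x) \ge \theta + (1-\theta)a > 0$ uniformly, so $y \mapsto \log(\theta+(1-\theta)y)$ is Lipschitz on $[a,b]$; integrating the resulting pointwise bound against the probability density $g_0$ gives $|l(f_n) - l(f)| \le C\,\|\mathcal{N}f_n - \mathcal{N}f\|_\infty \to 0$ for a constant $C$ depending only on $\theta$ and $a$. The only genuinely delicate point is the uniform lower bound $a > 0$ valid on all of $\mathcal{B}$ and on $\{\mathcal{N}f : f \in \mathcal{B}\}$: this is what keeps $\log$ away from its singularity and makes every arrow in the chain Lipschitz. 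Once it is secured, continuity follows at once from the non-expansiveness of $\mathcal{S}^*$ together with elementary Lipschitz estimates.
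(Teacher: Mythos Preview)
Your proof is correct and follows essentially the same route as the paper's: both secure uniform two-sided bounds on elements of $\mathcal{B}$ (the paper uses $m \le f \le M/m$) and then propagate uniform convergence through the chain $f \mapsto \log f \mapsto \mathcal{S}^*(\log f) \mapsto \mathcal{N}f \mapsto l(f)$ via Lipschitz-type estimates. The only cosmetic difference is that the paper uses the local inequalities $|\log(1+x)|\le 2|x|$ and $|\exp(x)-1|\le 2|x|$ for small $x$ in place of your explicit Lipschitz constants, reaching the same conclusion $|l(f_n)-l(f)| \le C\|f_n-f\|_\infty$.
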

First, for all $f \in \mathcal{B}$, we remark that $m\leq f(\cdot)\leq M/m$. Thus, $\mathcal{N}(f)$ and $l(f)$ are well-defined for $f \in \mathcal{B}$. Moreover, it is easy to see that $l(f)$ is bounded below on $\mathcal{B}$. According to the definition~\eqref{eq:f^t} of the sequence $\{f^t\}_{t\ge 0}$, every
function $f^t$ belongs to $\mathcal{B}$. As a consequence, we obtain that the sequence $\{l(f^t)\}_{t\ge 0}$  is decreasing and lower bounded, thus it is convergent and the sequence  $\{f^t\}_{t\ge 0}$ converges (simply) to a local minimum of $l$.   

Now, it is easy to see that $l$ is a strictly convex function on the convex set $\mathcal{B}$ (relying on \cite{Eggermont99}). Existence and uniqueness of the minimum $f^\star$ of $l$ in $\mathcal{B}$ thus follows, as well as the simple convergence of the iterative sequence $\{f^t\}_{t\ge 0}$ to this unique minimum.

For all $x, y \in [0,1]$ and for all
$t$, we have
\begin{eqnarray*}
| f^t(x) -f^t(y) | &=& \frac{1}{\int_0^1
  \omega_t(u)g_0(u)du}\Big | \int_0^1
\frac{[K_h(u-x)-K_h(u-y)]\omega_t(u)g_0(u)}{\int_0^1K_h(s-u)ds}du
\Big |\\
&\leq& \frac{1}{\int_0^1
  \omega_t(u)g_0(u)du} \int_0^1
\frac{| K_h(u-x)-K_h(u-y)| \omega_t(u)g_0(u)}{m}du\\
&\leq& \frac{L}{m}| x-y |,
\end{eqnarray*}
so that the sequence $\{f^t\}$ is uniformly bounded and
equicontinuous. Relying on Arzel\`{a}-Ascoli theorem, there exists a
subsequence  $\{f^{t_k}\}$  of  $\{f^t\}$  which  converges  uniformly to some limit. However, this uniform limit must be the simple limit of the sequence, namely the minimum $f^\star$ of $l$. Now, uniqueness of the uniform limit value of the sequence $\{f^t\}_{t\ge 0}$ entails its convergence.  

\end{proof}


\appendix                    

\section{Proofs of technical lemmas}\label{sec:appendix}

\subsection{Proof of Lemma \ref{lem-expectation-Sn}}\label{sec:appen_Sn}
\begin{proof}
We first show~\eqref{eq:toto10}.  
According to the law of large numbers, since
$\esp_{\theta,f}\big(f(X_1)/g(X_1)\big) = 1$, we have
\begin{align}
\label{eq:toto27}
\frac {S_n} {n} = \frac {1} {n}
\displaystyle\sum_{i=1}^n \frac {f(X_i)} {g(X_i)} \xrightarrow[n\to\infty] {as} 1.
\end{align}
Hence 
\begin{equation*}
\frac{n^2}{S_n^2} = \big(\frac{S_n}{n}\big)^{-2} \xrightarrow[n\to\infty] {as} 1.
\end{equation*}
By the dominated convergence theorem, there exists a constant $c_2 >
0$ such that for $n$ large enough
\begin{equation*}
\esp_{\theta,f}\big[\frac{1}{S_n^2}\big] =\frac{1}{n^2} \esp_{\theta,f}\big[\frac{n^2}{S_n^2}\big]
\leq \frac{c_2}{n^2},
\end{equation*}
establishing~\eqref{eq:toto10}. 
Let us now prove~\eqref{eq:toto9}.
By using a  Taylor's expansion, we have
\begin{eqnarray*}
\frac {1} {S_n} = \frac {1} {n} \times \frac {1} {1 + (\frac {S_n} {n} - 1)} = \frac {1} {n}\left[2 -  \frac {S_n} {n} + \Big(\frac {S_n} {n} - 1\Big)^2 \frac
{1}{(1 + \gamma_n(\frac{S_n} {n} - 1))^3}\right] ,
\end{eqnarray*}
where $\gamma_n \in ]0, 1[$ depends on $S_n$. Combining this with
\eqref{eq:toto27}, we obtain
\begin{equation*}
\frac {1}{(1 + \gamma_n(\frac{S_n}{n} - 1))^3} \xrightarrow[n\to\infty] {as} 1.
\end{equation*}
Thus, there exist some positive constants $c, c'$ such that for $n$
large enough,
\begin{eqnarray}
\label{eq:toto28}
\frac {1} {n}\Big[2 -\frac {S_n} {n} + c'\big(\frac {S_n} {n} - 1\big)^2\Big] \leq \frac {1} {S_n} \leq \frac {1} {n}\Big[2 -\frac {S_n} {n} + c\big(\frac {S_n} {n} - 1\big)^2\Big]\quad\text{a.s.}
\end{eqnarray}
This implies in particular that 
\begin{eqnarray*}
\esp_{\theta,f}\Big[\frac {1} {S_n}\Big] \leq \frac {1} {n}\Big[2 -  \frac {\esp_{\theta,f}[S_n]} {n} +
c\esp_{\theta,f}\big[(\frac {S_n} {n} - 1)^2\big]\Big] = \frac {1} {n} + \frac{c}{n}\esp_{\theta,f}\Big[(\frac {S_n} {n} - 1)^2\Big].
\end{eqnarray*}
In addition,
\begin{eqnarray*}
\esp_{\theta,f}\Big[(\frac {S_n} {n} - 1)^2\Big] = \Var\Big(\frac{S_n}{n}\Big) = \frac{1}{n}\Var\left(\frac{f(X_1)}{g(X_1)}\right).
\end{eqnarray*}
Remember that the ratio $f/g$ is bounded (by $\delta^{-1}$) and thus has finite variance. Hence, there exists a positive constant $c_1$ such that for $n$ large enough
\begin{equation*}
\esp_{\theta,f}\big[\frac{1}{S_n}\big] \leq \frac{1}{n} + \frac{c_1}{n^2}.
\end{equation*}
We now prove~\eqref{eq:toto11}. 
By using again a Taylor expansion, we have
\begin{equation*}
\frac {1} {S_n + \delta^{-1}} 
= \frac {1} {S_n} \times \frac {1} {1 + 1/(\delta S_n)}
= \frac {1} {S_n} - \frac {1} {\delta S_n^2} \times \frac {1} {[1 + \beta_n / (\delta S_n)]^2},
\end{equation*}
where $\beta_n \in ]0,1[$ depends on $S_n$. We also have 
\begin{equation*}
\frac {1} {[1 + \beta_n / (\delta S_n)]^2}
\xrightarrow[n\to\infty] {as} 1.
\end{equation*}
Thus, there exists a positive constant $c''$ such that for $n$ large enough
\begin{eqnarray*}\
\esp_{\theta,f}\Big[\frac {1} {S_n + \delta^{-1}} \Big]
= \esp_{\theta,f}\Big[\frac {1} {S_n} - \frac {1} {\delta S_n^2} \times \frac {1} {[1 + \beta_n /(\delta S_n)]^2} \Big]
\geq \esp_{\theta,f}\Big[\frac {1} {S_n}\Big] - \esp_{\theta,f}\Big[\frac {c''}{S_n^2}\Big] \quad\text{a.s.}
\end{eqnarray*}
According to \eqref{eq:toto28}, we have
\begin{eqnarray*}
\esp_{\theta,f}\Big[\frac {1} {S_n}\Big] \geq \frac {1} {n}\left[2 -  \frac {\esp_{\theta,f}[S_n]} {n} +
c' \esp_{\theta,f}\Big[(\frac {S_n} {n} - 1)^2\Big]\right] = \frac {1} {n} + \frac{c'}{n^2}\Var\left(\frac{f(X_1)}{g(X_1)}\right),
\end{eqnarray*}
and it is proved above that 
\begin{equation*}
\esp_{\theta,f}\Big[\frac{1}{S_n^2}\Big] \leq \frac{c_2}{n^2}. 
\end{equation*}
Thus we obtain Inequality~\eqref{eq:toto11}, namely 
\begin{equation*}
\esp_{\theta,f}\Big[\frac {1} {S_n+\delta^{-1}}\Big] \geq \frac{1}{n} - \frac{c_3}{n^2}.
\end{equation*}
Finally, we show~\eqref{eq:toto12}. 
In the same way as we proved~\eqref{eq:toto11} above, we have for large enough $n$, 
 \begin{equation*}
 \esp_{\theta,f}\Big[\frac {1} {S_n+2\delta^{-1}}\Big] \geq \frac{1}{n} - \frac{c_3'}{n^2} > 0
 \end{equation*}
and thus
\begin{equation}
\label{eq:toto29}
\esp_{\theta,f}^2\Big[\frac {1} {S_n+2 \delta^{-1}}\Big] \geq \frac{1}{n^2}\left(1 - \frac{2c_3'}{n} +
\frac{c_3'^2}{n^2}\right) \geq \frac{1}{n^2}\Big(1 - \frac{2c_3'}{n}\Big).
\end{equation}
According to Inequality~\eqref{eq:toto28} (containing only positive terms for $n$ large enough), we have
\begin{eqnarray*}
\frac {1} {S_n^2} &\leq& \frac {1} {n^2}\left[4 +\frac {S_n^2} {n^2} +
c^2\Big(\frac {S_n} {n} - 1\Big)^4 -4\frac{S_n}{n} +4c\Big(\frac {S_n} {n} -
1\Big)^2 - 2c\frac{S_n}{n}\Big(\frac {S_n} {n} - 1\Big)^2\right]\quad\text{(as)}\\
&\leq& \frac {1} {n^2}\left[4 +\frac {S_n^2} {n^2} +
c^2\Big(\frac {S_n} {n} - 1\Big)^4 -4\frac{S_n}{n} +4c\Big(\frac {S_n} {n} -
1\Big)^2\right]\quad\text{a.s.}
\end{eqnarray*}
Since 
\begin{align*}
\esp_{\theta,f}[S_n] = n,\quad \esp_{\theta,f}[S_n^2] =
n\Var\left(\frac{f(X_1)}{g(X_1)}\right)+n^2\quad \text{and}\quad \esp_{\theta,f}\left[\Big(\frac {S_n} {n} - 1\Big)^2\right] = \frac{1}{n}\Var\left(\frac{f(X_1)}{g(X_1)}\right), 
\end{align*}
we have
\begin{eqnarray}
\label{eq:toto30}
\esp_{\theta,f}\Big[\frac {1} {S_n^2}\Big] &\leq& \frac {1} {n^2}\left[4 +\frac {\esp_{\theta,f}[S_n^2]} {n^2} +
c^2\esp_{\theta,f}\Big[\Big(\frac {S_n} {n} - 1\Big)^4\Big] -4\frac{\esp_{\theta,f}[S_n]}{n}
+4c\esp_{\theta,f}\Big[\Big(\frac {S_n} {n}- 1\Big)^2\Big]\right] \notag\\
&\leq&\frac {1} {n^2}\left[4
+\frac{1}{n}\Var\left(\frac{f(X_1)}{g(X_1)}\right) +1 +
c^2\esp_{\theta,f}\Big[\Big(\frac {S_n} {n} - 1\Big)^4\Big] - 4 +
\frac{4c}{n}\Var\left(\frac{f(X_1)}{g(X_1)}\right)\right]\notag \\
&\leq& \frac {1} {n^2}\left[1 +  \frac{C_4}{n} +
c^2\esp_{\theta,f}\Big[\Big(\frac {S_n} {n} - 1\Big)^4\Big]\right].
\end{eqnarray}
Combining \eqref{eq:toto29} and \eqref{eq:toto30}, we get that
\begin{equation}
\label{eq:toto31}
\esp_{\theta,f}\Big[\frac{1}{S_n^2}\Big] - \esp_{\theta,f}^2\Big[\frac {1} {S_n+2\delta^{-1}}\Big] \leq \frac{C}{n^3} + \frac{c^2}{n^2}\esp_{\theta,f}\Big[\Big(\frac {S_n} {n} - 1\Big)^4\Big].
\end{equation}
We now upper-bound  the quantity $\esp_{\theta,f}\big[(\frac {S_n} {n} - 1)^4\big]$.
Let us denote by
\begin{equation*}
U_i = \frac{f(X_i)}{g(X_i)} - 1.
\end{equation*}
We have
\begin{eqnarray*}
\left(\frac {S_n} {n} - 1\right)^4 &=& \frac{1}{n^4}\left(\displaystyle\sum_{i=1}^nU_i\right)^4 = \frac{1}{n^4}\displaystyle\sum_{i=1}^nU_i^4 +
\frac{1}{n^4}\displaystyle\sum_{i\neq j}^nU_i^3U_j+\\
& & + \frac{1}{n^4}\displaystyle\sum_{i\neq j}^nU_i^2U_j^2 +
\frac{1}{n^4}\displaystyle\sum_{i\neq j\neq k}^nU_i^2U_jU_k +  \frac{1}{n^4}\displaystyle\sum_{i\neq j\neq k\neq l}^nU_iU_jU_kU_l.
\end{eqnarray*}
Since the random variables $U_i$ are iid with mean
zero, we obtain
\begin{equation}
\label{eq:toto32}
\esp_{\theta,f}\left[\Big(\frac {S_n} {n} - 1\Big)^4\right] = \frac{1}{n^4}\left[n\esp_{\theta,f}(U_1^4)
+n(n-1)\esp_{\theta,f}(U_1^2U_2^2)\right] = O\Big(\frac{1}{n^2}\Big).
\end{equation}
Finally, according to \eqref{eq:toto31} and \eqref{eq:toto32} we have
\begin{equation*}
\esp_{\theta,f}\Big[\frac{1}{S_n^2}\Big] - \esp_{\theta,f}^2\Big[\frac {1} {S_n+2\delta^{-1}}\Big] = O\Big(\frac{1}{n^3}\Big).
\end{equation*}
\end{proof}

\subsection{Proof of Lemma~\ref{lem:sum_hat_tau}} \label{sec:appen_sum_hat_tau}
\begin{proof}
We write
\begin{equation*}
\frac{1}{\sum_k\hat{\tau}_k} = \frac{1}{\sum_k\tau_k +
  \sum_k(\hat{\tau}_k - \tau_k)}=
\frac{1}{\sum_k\tau_k}-\frac{\sum_k(\hat{\tau}_k
  -\tau_k)}{(\sum_k\tau_k)^2}\times \int_0^1\left(1+s\frac{\sum_k(\hat{\tau}_k
  -\tau_k)}{\sum_k\tau_k}\right)^{-2}ds.
\end{equation*}
Let us establish that  $\|\hat{\tau} - \tau\|_{\infty,[0,1]} =\sup_{x\in [0,1]} |\hat{\tau}(x) - \tau(x)|$ converges almost surely to zero. 
Indeed, 
\begin{equation*}
\hat{\tau}(x)-\tau(x) 
= (\theta - \hat{\theta}_n)\frac{1}{g(x)} + \hat{\theta}_n\left(\frac{1}{g(x)} -
  \frac{1}{\tilde{g}_n(x)}\right)
\end{equation*}
and using the same argument as for establishing~\eqref{eq:bound-gcp-g}, we get 
that for $n$ large enough 
and  for all $x \in [0, 1]$,
\begin{equation*}
|\hat{\tau}(x)-\tau(x)| \leq  \frac{|\hat{\theta}_n - \theta|}{\theta} + 2|\hat{\theta}_n|\frac{\|\hat{g}_n-g\|_{\infty}}{\theta^2} \leq \delta^{-1}|\hat{\theta}_n - \theta| +2\delta^{-2} \|\hat{g}_n-g\|_{\infty}.
\end{equation*}
By using consistency of $\hat \theta_n$ and Remark~\ref{rem:first}, we obtain that $\|\hat{\tau} - \tau\|_{\infty,[0,1]}$ converges almost surely to zero. 
Now, 
\begin{eqnarray*}
\forall s \in [0, 1], \quad 1+s\frac{\sum_k(\hat{\tau}_k
  -\tau_k)}{\sum_k\tau_k} &\geq& 1-s\frac{n\|\hat{\tau}_k
  -\tau_k\|_{\infty,[0, 1]}}{\sum_k\tau_k}\\
   &\geq& 1-s\frac{2\|\hat{\tau}_k
  -\tau_k\|_{\infty,[0, 1]}}{\theta} \geq 1-\frac{s}{2} > 0 \quad \mbox{a.s.}
\end{eqnarray*}
We obtain that
\begin{eqnarray*}
\frac{n}{|\sum_k\hat{\tau}_k|} &\leq& \frac{n}{\sum_k\tau_k}+\frac{n\sum_k|\hat{\tau}_k  -\tau_k|}{(\sum_k\tau_k)^2}\times \int_0^1\left(1+s\frac{\sum_k(\hat{\tau}_k
  -\tau_k)}{\sum_k\tau_k}\right)^{-2}ds \\
 &\leq& \frac{n}{\sum_k\tau_k}+\frac{n^2\|\hat{\tau}
   -\tau\|_{\infty,[0,1]}}{(\sum_k\tau_k)^2}\times
 \int_0^1\left(1-\frac{s}{2}\right)^{-2}ds \\
 &\leq& \frac{2}{1-\theta}+\frac{8\|\hat{\tau}
   -\tau\|_{\infty,[0,1]}}{(1-\theta)^2} \leq c_7 \quad \mbox{a.s.}
\end{eqnarray*}
\end{proof}

\subsection{Proof of Lemma \ref{lem:control_K1}}\label{sec:appen_control_K1}
\begin{proof}
In order to prove \eqref{eq:toto}, let us consider iid random variables $U_1,\dots,U_n$  defined as
\begin{equation*}
U_i= \Big|K\left(\frac{x-X_i}{h}\right)\Big|.
\end{equation*}
For all $1 \leq p \leq 4$, we have 
\begin{equation*}
\esp_{\theta,f}(U_i^p)=\int \Big|K^p\left(\frac{x-t}{h}\right)\Big|g(t)dt=h\int
\big|K^p(t)\big|g(x+th)dt\leq C_{15}h.
\end{equation*}
We then write
\begin{equation}
\label{eq:toto23}
\Big(\frac{1}{nh}\sum_{i=1}^n\Big|K\Big(\frac{x-X_i}{h}\Big)\Big|\Big)^4 =
\frac{1}{n^4h^4}\Big(\sum_iU_i\Big)^4,
\end{equation}
where
\begin{equation*}
\Big(\sum_iU_i\Big)^4 =\sum_iU_i^4+\sum_{i\neq j}U_i^3U_j+\sum_{i\neq j}U_i^2U_j^2+\sum_{i\neq j\neq
  k}U_i^2U_jU_k+\sum_{i\neq j\neq k\neq l}U_iU_jU_kU_l.
\end{equation*}
And for all choice of the bandwidth $h > 0$ such that $nh \rightarrow \infty$,
\begin{align}
\label{eq:toto24}
&\esp_{\theta,f}\Big[\Big(\sum_iU_i\Big)^4\Big] \notag \\ 
= & n\esp_{\theta,f}(U_1^4)+n(n-1)\esp_{\theta,f}(U_1^3U_2)+n(n-1)\esp_{\theta,f}(U_1^2U_2^2)+\notag\\
&  +n(n-1)(n-2)\esp_{\theta,f}(U_1^2U_2U_3)+ 
 n(n-1)(n-2)(n-3)\esp_{\theta,f}(U_1U_2U_3U_4)\notag\\
= & n\esp_{\theta,f}(U_1^4)+n(n-1)\esp_{\theta,f}(U_1^3)\esp_{\theta,f}(U_1)+n(n-1)\esp_{\theta,f}^2(U_1^2)+\notag\\
&  +n(n-1)(n-2)\esp_{\theta,f}(U_1^2)\esp_{\theta,f}^2(U_1)+
 n(n-1)(n-2)(n-3)\esp_{\theta,f}^4(U_1)\notag\\
\leq  &  C_{15}n^4h^4.
\end{align}
According to~\eqref{eq:toto23} and~\eqref{eq:toto24} we obtain the result. 
\end{proof}


\subsection{Proof of Lemma \ref{lem-continuous}}\label{sec:appen_continuous}
\begin{proof}
Let $f$ be a function in $\mathcal{B}$ and $\{f_n\}$ be a sequence of
densities on $[0,1]$ such that $\|f_n - f\|_{\infty}
\xrightarrow[n\to\infty] {} 0$. Let us recall that every $f\in \mathcal{B}$ satisfies the bounds $m\le f\le M/m$.
We have
\begin{eqnarray*}
\mid l(f_n) -l(f) \mid &=& \Big| \int_0^1 g_0(x)\log \frac{\theta + (1-\theta)\mathcal{N}f(x)}{\theta + (1-\theta)\mathcal{N}f_n(x)}dx\Big|\\
&\leq&\int_0^1 g_0(x)  \Big| \log \Big\{1+\frac{(1-\theta)[\mathcal{N}f_n(x)-\mathcal{N}f(x)]}{\theta + (1-\theta)\mathcal{N}f_n(x)} \Big\}\Big| dx,
\end{eqnarray*}
and
\begin{eqnarray*}
\mid \mathcal{N}f_n(x) -\mathcal{N}f(x) \mid &=& \mathcal{N}f(x)\Big|
\exp \frac{\int_0^1 K_h(u-x)[\log f_n(u)-\log
  f(u)]du}{\int_0^1K_h(s-x)ds} - 1\Big|\\
&\leq& \frac M m \Big|
\exp \frac{\int_0^1 K_h(u-x)[\log f_n(u)-\log
  f(u)]du}{\int_0^1K_h(s-x)ds} - 1\Big| .
\end{eqnarray*}
For $|x|<\epsilon$ small enough, we have $|\log (1+x)| \leq
2|x|$ and $|\exp (x) - 1| \leq 2|x|$. Combining with the fact that $f$
is bounded, we get that
\begin{eqnarray*}
\Big| \int_0^1 K_h(u-x)[\log f_n(u)-\log
  f(u)]du\mid &\leq&\int_0^1 K_h(u-x)\Big| \log \big\{1 + \frac{f_n(u)-
  f(u)}{f(u)}\big\} \Big| du\\
&\leq& 2\|f_n - f\|_{\infty}
\end{eqnarray*}
and thus
\begin{equation*}
\|\mathcal{N}f_n - \mathcal{N}f\|_{\infty} \leq \frac{4M}{m^2} \|f_n - f\|_{\infty}.
\end{equation*}
We finally obtain
\begin{equation*}
\mid l(f_n) -l(f) \mid \leq C\|f_n - f\|_{\infty},
\end{equation*}
where $C$ is a   constant depending on $h, K$ and
$\theta$. 
\end{proof}

\bibliographystyle{chicago}
\bibliography{localFDR}

\end{document}